\documentclass[11pt]{article}
\usepackage[letterpaper,margin=1in]{geometry}
\usepackage[T1]{fontenc}
\usepackage[utf8]{inputenc}
\usepackage[backend=biber, style=alphabetic, maxbibnames=99]{biblatex}
\usepackage{lmodern,microtype,amsmath,amssymb,amsthm,complexity}
\usepackage{booktabs,enumitem,needspace}
\usepackage{xcolor}
\usepackage{hyperref}
\numberwithin{equation}{section}
\usepackage[nameinlink]{cleveref}
\usepackage{authblk}

\definecolor{navy}{HTML}{000099}
\definecolor{teal}{HTML}{0F766E}
\hypersetup{
    colorlinks=true,
    citecolor=teal,
    linkcolor=magenta,
}

\allowdisplaybreaks[1]

\newtheorem{theorem}{Theorem}[section]
\newtheorem{lemma}[theorem]{Lemma}
\newtheorem{proposition}[theorem]{Proposition}
\newtheorem{corollary}[theorem]{Corollary}
\newtheorem{claim}[theorem]{Claim}
\newtheorem*{mainresult}{Theorem~\ref{thm:general}}
\newtheorem*{mainbound}{Theorem~\ref{thm:main}}
\newtheorem*{sparsestate}{Lemma~\ref{lem:sparse-state}}
\crefname{claim}{claim}{claims}
\Crefname{claim}{Claim}{Claims}
\theoremstyle{definition}
\newtheorem{definition}[theorem]{Definition}
\theoremstyle{remark}
\newtheorem{remark}[theorem]{Remark}

\newcommand{\Tr}{\operatorname{Tr}}
\newcommand{\trans}{\mathsf{T}}
\renewcommand{\Re}{\operatorname{Re}}

\newcommand{\conv}{\operatorname{conv}}
\newcommand{\Var}{\operatorname{Var}}
\newcommand{\op}{\mathrm{op}}
\newcommand{\ket}[1]{|#1\rangle}
\newcommand{\bra}[1]{\langle #1|}

\newcommand{\pairs}{E}
\newcommand{\vx}{x}
\newcommand{\vy}{y}

\title{\bf Improved classical optimization of fermionic\\ Hamiltonians beyond Gaussian states}

\author{Ojas Parekh\thanks{\href{mailto:odparek@sandia.gov}{odparek@sandia.gov}}}
\affil{Quantum Algorithms and Applications Collaboratory,\\Sandia National Laboratories,\\Albuquerque, NM, USA}

\date{}

\begin{document}
\maketitle
\vspace{-1.5em}
\begin{abstract}
We give a polynomial-time randomized classical algorithm that approximates the largest eigenvalue $\lambda_{\max}$ of a traceless quartic fermionic Hamiltonian $H$ on $n$ modes. The algorithm achieves energy at least $\lambda_{\max}/O(\sqrt n)$ and outputs either a fermionic Gaussian state or a polynomially sparse state in the Fock basis. This is achieved through a new upper bound on $\lambda_{\max}$ in terms of the maximum energy of a Gaussian state and $\sqrt{2^{-n} \Tr[H^2]}$.
\end{abstract}

\section{Introduction}\label{sec:statement}
Finding the extremal eigenvalues of an $n$-mode quartic fermionic Hamiltonian is a basic
optimization problem in quantum information and quantum many-body physics~\cite{McArdle2020,Arovas2022}.
Even when the Hamiltonian is a polynomial of low degree, it acts on a
space of dimension exponentially large in $n$. Computing extremal eigenvalues of such Hamiltonians to inverse-polynomial additive precision in $n$ is \QMA-hard~\cite{LCV2007}, so we turn to coarser approximations in order to obtain efficient algorithms.

An approximation algorithm is a polynomial-time classical or quantum algorithm that approximates the maximum\footnote{This is equivalent to finding the minimum eigenvalue of $-H$ for traceless Hamiltonians $H$, as we consider here.} eigenvalue within a provable multiplicative approximation ratio. Approximation algorithms typically focus on two questions: how much of the optimal energy can be
provably achieved, and how to efficiently describe and compute properties of a state achieving that energy. Fermionic Gaussian states offer a
natural answer to the second question: they admit a compact description by a
matrix of quadratic moments, from which higher-order moments
can be evaluated by Wick's theorem~\cite{Bravyi}.

For traceless fermionic Hamiltonians with quadratic and quartic terms, Bravyi, Gosset, K\"onig, and Temme~\cite{BGKT} give a randomized classical approximation algorithm using Gaussian states to achieve an $\Omega(1/(n\log n))$ approximation ratio. A natural question is whether constant-factor approximations are possible using Gaussian states, which is not the case. Hastings and O'Donnell~\cite{HO} show that the optimal approximation ratio using Gaussian states is $\Theta(1/\sqrt{n})$ on instances of the quartic Sachdev--Ye--Kitaev (SYK) model. This is a well-studied random fermionic model~\cite{DKKA,BassoChenDalzell2024}. Hastings and O'Donnell~\cite{HO} also give a quantum $\Omega(1)$-approximation for the SYK model; however, their $\Omega(1/\sqrt{n})$-approximation is the best-known classical approximation algorithm. 

This leaves a substantial gap for approximating general quartic fermionic Hamiltonians. We improve the general guarantee to
$\Omega(1/\sqrt{n})$ by supplementing Gaussian states with a simple non-Gaussian family.

\subsection{Problem and main result}
A system of $n$ fermionic modes has $d=2n$ Majorana operators
$c_1,\ldots,c_d$ acting on a Hilbert space $\mathcal H$ of dimension $2^n$.
They satisfy the canonical anticommutation relations (CAR):
\begin{equation}
 c_i^\dagger=c_i,\quad c_ic_j+c_jc_i=2\delta_{ij}I_{\mathcal H}.
 \label{eq:car}
\end{equation}
Let $[d]=\{1,\ldots,d\}$. For $S=\{s_1,\ldots,s_k\} \subseteq [d]$ with $s_1<\cdots<s_k$, let
$c_S=c_{s_1}\cdots c_{s_k}$, with $c_\varnothing=I_{\mathcal H}$.
We consider the homogeneous quartic Hamiltonian
\begin{equation}
 H=\sum_{S\subseteq[d]:|S|=4}w_Sc_S.
 \label{eq:input}
\end{equation}
The coefficients are rational, and the ordered quartic Majorana monomials are
Hermitian and traceless. Let $t=|\{S\subseteq[d]:|S|=4,\,w_S\ne0\}|$ be the number of nonzero coefficients.
For a density matrix $\rho$, let
$\langle A\rangle_\rho=\Tr_{\mathcal H}[\rho A]$, and use
$\langle A\rangle_\psi=\bra\psi A\ket\psi$ for a state $\ket{\psi}$.
Our objective is $\lambda_{\max}(H)=\max_\rho\langle H\rangle_\rho$.
As in approximation algorithms for classical problems, an $\alpha$-approximation produces a solution of value at least $\alpha$ times the optimum in polynomial time. For our problem, this means producing a classical description of a quantum state whose energy is classically computable in polynomial time.

\begin{theorem}\label{thm:general}
There is a randomized polynomial-time classical algorithm that, given
a Hamiltonian $H$ of the form \eqref{eq:input} on $n\ge2$ modes, returns a description of a state $\ket{\psi}$ satisfying
\begin{equation}
 \langle H\rangle_\psi\ge
 \frac{\lambda_{\max}(H)}{c_G\log(2n)+80\sqrt n}
 \label{eq:general-ratio}
\end{equation}
with probability at least $2/3$, for a universal constant $c_G$.
The output is either a fermionic Gaussian state or a superposition
of at most $t+1$ Fock basis states.
Its energy is efficiently computable classically.
The running time is polynomial in $n$ and the input length.
\end{theorem}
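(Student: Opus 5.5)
The plan is to prove the upper bound advertised in the abstract,
\begin{equation*}
 \lambda_{\max}(H)\le c_1\,G(H)+c_2\sqrt{n}\,\|H\|_2 ,
\end{equation*}
where $G(H)=\max_{\rho\text{ Gaussian}}\langle H\rangle_\rho$ and $\|H\|_2=\sqrt{2^{-n}\Tr[H^2]}=\big(\sum_S w_S^2\big)^{1/2}$ (the monomials $c_S$ are orthonormal under the normalized trace). Alongside it we need two algorithmic lower bounds. The first is a Gaussian one: a randomized rounding of a semidefinite relaxation of the quadratic-moment (covariance) matrix, in the spirit of Grothendieck-type/BGKT rounding. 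It should produce a Gaussian state with energy at least $G(H)/(c_G\log(2n))$ with constant probability, and the energy can be evaluated exactly by Wick's theorem. The second is a sparse-state bound: an efficiently constructed superposition of at most $t+1$ Fock states with energy $\Omega(\|H\|_2)$. We output the better of the two states. If $\lambda_{\max}\le 2c_1G$, the Gaussian branch already gives ratio $1/O(\log n)$. Otherwise $\lambda_{\max}\le 2c_2\sqrt n\,\|H\|_2$, and the sparse branch gives ratio $\Omega(1/\sqrt n)$. Balancing the constants should yield the stated denominator $c_G\log(2n)+80\sqrt n$.

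\emph{Sparse state.} Fix the Fock vacuum $\ket{0}$. Each quartic $c_S$ maps a Fock basis state $\ket{x}$ to a phase times $\ket{x\oplus m_S}$, where $m_S$ is determined by the modes that $S$ touches an odd number of times. Take $\ket\psi=\alpha\ket{x_0}+\sum_S\beta_S\,c_S\ket{x_0}$, a superposition of at most $t+1$ basis states. The cross terms $\langle x_0|H c_S|x_0\rangle$ contribute $\sum_S w_S\beta_S$, up to collisions where different $S$ map $x_0$ to the same string or where $c_S$ acts diagonally. Choosing $\beta_S\propto w_S/\|H\|_2$ and $\alpha=1/\sqrt2$ makes the cross terms give about $\|H\|_2/\sqrt2$. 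The remaining term $\langle\phi|H|\phi\rangle$ for $\ket\phi=\sum\beta_S c_S\ket{x_0}$ must be controlled. For this I would randomize: choose $x_0$ uniformly at random and apply random signs to the $\beta_S$, or use the trick of picking the better of $\pm$. Averaging over $x_0$ turns diagonal contributions into normalized traces, which vanish since $H$ is traceless. The energy of any such state is computable classically in time $O(t^2)$ by direct matrix-element evaluation, and derandomizing or repeating gives success probability $2/3$.

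\emph{Upper bound.} This is the main obstacle. My approach is to decompose $H$ using pair structure. For each pair $\{i,j\}$, group the terms of $H$ as $H=\sum_{i<j}c_ic_j\,Q_{ij}$ with quadratic $Q_{ij}$, and apply Cauchy--Schwarz in an arbitrary state $\rho$:
\[
\langle H\rangle_\rho\le \frac{1}{\text{mult}}\sum_{ij}|\langle c_ic_jQ_{ij}\rangle|.
\]
The goal is to split $\langle H\rangle_\rho$ into a part expressible through the covariance matrix $M_{ij}=\langle ic_ic_j\rangle_\rho$, which a Gaussian state with the same covariance reproduces up to Wick corrections and which is therefore bounded by $G(H)$, and a connected (cumulant) part. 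The connected part is bounded by $\sqrt{\sum w_S^2}$ times a factor $\sqrt{d}$ coming from the operator-norm constraint on the fluctuations $\sum_j(\cdot)^2\le O(d)$. Concretely, I expect to use $\sum_{ij}\langle (c_ic_j)^\dagger c_ic_j\rangle\le d^2$, combined with a sharper spectral bound on the anti-symmetric covariance, $\|M\|_{\op}\le1$ and $\|M\|_F^2\le n$, to extract exactly one factor $\sqrt n$ instead of $n$. If a direct cumulant argument fails, the fallback is to compare with the Gaussian state having covariance $M$: there $\langle H\rangle_{\rho_M}=\sum w_S\,\mathrm{Pf}(M_S)$, which is at most $G(H)$, and I would bound $\langle H\rangle_\rho-\sum w_S\mathrm{Pf}(M_S)$ by $O(\sqrt n)\|H\|_2$ via Cauchy--Schwarz over $S$.
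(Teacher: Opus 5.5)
\textbf{The additive bound is missing.} Your architecture matches the paper's: an additive bound $\lambda_{\max}(H)\le G(H)+5\sqrt n\,\sigma$, BGKT as a black box for the Gaussian branch, a sparse state of energy at least $\sigma/16$, and returning the better candidate. But the additive bound, which is the real content of the theorem, is never established, and both routes you sketch for it fail.

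Your fallback compares an arbitrary $\rho$ with the Gaussian state of the same covariance. Cauchy--Schwarz over $S$ then needs $\sum_{|S|=4}(\langle c_S\rangle_\rho-\mathrm{Pf}(M_S))^2=O(n)$, and this is false. Take $\ket\psi=(\ket{0^n}+\ket{1^n})/\sqrt2$ with $n\ge5$.
\begin{itemize}
\item Quadratic and quartic monomials flip at most two and four occupation bits, so every quadratic moment vanishes. The covariance-matched Gaussian is therefore $2^{-n}I$.
\item Yet $\langle c_S\rangle_\psi=-1$ for all $\binom n2$ sets $S=\{2j-1,2j,2k-1,2k\}$.
\item For $H=\sum_{j<k}(ic_{2j-1}c_{2j})(ic_{2k-1}c_{2k})$, $\ket\psi$ is a top eigenvector with energy $\binom n2=\sigma^2$, while the covariance-matched Gaussian has energy $0$.
\item The gap is $\sigma\sqrt{n(n-1)/2}$, far above $O(\sqrt n)\,\sigma$.
\end{itemize}
The ``connected part'' in your first route is exactly this difference, so that route fails for the same reason. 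Constraints on the two-point function, such as $\|M\|_{\op}\le1$ or $\|M\|_F^2\le n$, cannot control the quartic moments of non-Gaussian states.

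The missing idea is to compare $\rho$ with a \emph{mixture} of pure Gaussians whose quadratic moments need not match those of $\rho$. In the example, $\tfrac12(\ket{0^n}\bra{0^n}+\ket{1^n}\bra{1^n})$ reproduces all quartic moments. The paper builds such a mixture in three steps.
\begin{itemize}
\item The closest point $a\ket g$ of the Gaussian cone to a top eigenvector of $H_A$ is stationary along the Gaussian-preserving curves $z\mapsto a e^{z\widehat Q_x}\ket g$. Second-order optimality gives $|\langle\xi|Q_x^2|g\rangle|\le a\Var_g(Q_x)\le 2a\|x\|^2$, hence $\Tr[AM^\rho]\le\max_g\Tr[AM^g]+2\|A\|_*$ for every symmetric $A$.
\item Strict separation then yields a Gaussian mixture $\omega$ with $\|M^\rho-M^\omega\|_{\op}\le2$.
\item The operator inequality $R^2\preceq\frac n4H_{(C^R)^2}-n\tau[R^2]I$ upgrades this to $\|M^\rho-M^\omega\|_F\le5\sqrt{6n}$. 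Cauchy--Schwarz against $C^H$ then gives exactly one factor of $\sqrt n$.
\end{itemize}

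\textbf{The sparse-state averaging argument also has a gap.} With $\beta_S\propto w_S$ your vector is $(\alpha I+\beta H)\ket{x_0}$, essentially the paper's $(16\sigma I+H)\ket{y}$ but with a different relative weight. Averaging over $x_0$ gives unnormalized energy $2\alpha\beta\sigma^2+\beta^2\tau[H^3]$. The cubic term does not vanish just because $H$ is traceless.

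For $H=\sum_{j<k}c_{\{2j-1,2j,2k-1,2k\}}$ one finds $\tau[H^3]=-6\binom n3\approx-2\sqrt2\,\sigma^3$. With your weights $\alpha=1/\sqrt2$ and $\beta=1/(\sqrt2\,\sigma)$, the mean $\sigma+\tau[H^3]/(2\sigma^2)$ is negative for $n\ge6$, so the averaging argument certifies nothing. Neither proposed fix helps:
\begin{itemize}
\item random signs on the $\beta_S$ also kill the cross term;
\item a global sign flip of $\beta$ flips the cross term but not the $\beta^2$ term.
\end{itemize}

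Three ingredients are needed.
\begin{itemize}
\item A bound $|\tau[H^3]|=O(\sigma^3)$; the paper gets $|\tau[H^3]|=|\Tr[(C^H)^3]|\le6^{3/2}\sigma^3$.
\item A much heavier identity weight, as in $(16\sigma I+H)\ket{y}$.
\item A uniform bound on each trial's energy, via $\|H\|_{\op}\le\sqrt t\,\sigma$, to turn a positive mean into an inverse-polynomial acceptance probability.
\end{itemize}
Finally, without the explicit values $c_1=1$, $c_2=5$ and sparse energy $\sigma/16$, your ``balancing'' step cannot deliver the specific denominator $c_G\log(2n)+80\sqrt n$.
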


A Fock basis state is specified by an $n$-bit string specifying occupied modes. We describe the superposition in
\Cref{thm:general} by listing these strings along with their amplitudes.

We can also allow quadratic terms in $H$ using the reduction of Bravyi, Gosset,
K\"onig, and Temme~\cite[Lemma~6 and Appendix~C]{BGKT}. It converts a
traceless Hamiltonian with quadratic and quartic terms into a homogeneous quartic
Hamiltonian with one additional mode, preserving the maximum energy and
number of nonzero terms. Our guarantee therefore holds with $n$ replaced
by $n+1$. The reduction recovers a state on the original modes without
energy loss and preserves Gaussianity. 

For $H$ in \eqref{eq:input}, define the normalized trace and Frobenius norm of $H$ and its Gaussian optimum by
\begin{equation}
 \begin{aligned}
 \tau[A]&:=2^{-n}\Tr_{\mathcal H}[A],\\
 \sigma&:=2^{-n/2}\|H\|_F=\sqrt{\tau[H^2]}=\sqrt{\textstyle\sum_{S}w_S^2},\\
 G(H)&:=\max_{\ket{g}\text{ Gaussian}}\langle H\rangle_g.
 \end{aligned}
 \label{eq:scales}
\end{equation}
We also use $G(X)$ for the maximum Gaussian expectation of any Hermitian
operator $X$. \Cref{sec:majorana-monomials} verifies the trace identity regarding $\sigma$ in
\eqref{eq:scales}. The family of Gaussian states includes the Fock basis, so
$G(H)\ge\tau[H]=0$.

The approximation guarantee in \Cref{thm:general} stems from a bound on $\lambda_{\max}(H)$ in terms of the Gaussian optimum and the normalized
Frobenius norm.
\begin{theorem}\label{thm:main}
For every homogeneous quartic Hamiltonian $H$ on $n\ge2$ modes,
\begin{equation}
 \lambda_{\max}(H)\le G(H)+5\sqrt n\,\sigma.
 \label{eq:main-additive}
\end{equation}
\end{theorem}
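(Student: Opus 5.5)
The plan is to compare the top eigenstate with a Gaussian state that has the same quadratic moments, and to bound the non-Gaussian remainder by Cauchy--Schwarz against $\sigma$. Let $\ket\psi$ be a top eigenvector of $H$ and let $M_{ab}=\langle i c_a c_b\rangle_\psi$ for $a\neq b$. This is a real antisymmetric matrix with $\|M\|_{\op}\le 1$. Choose an orthogonal $R\in O(d)$ that brings $M$ to normal form $\bigoplus_{k=1}^n \lambda_k\left(\begin{smallmatrix}0&1\\-1&0\end{smallmatrix}\right)$ with $\lambda_k\in[-1,1]$, and pass to rotated Majoranas $c'_j=\sum_a R_{ja}c_a$. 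These satisfy \eqref{eq:car}, and $H$ is again homogeneous quartic in them, $H=\sum_{|S|=4}w'_S c'_S$. The tensor $w'$ is the image of $w$ under $\Lambda^4 R$, so $\sum_S (w'_S)^2=\sigma^2$. Gaussianity is also preserved by $R$, so $G(H)$ is unchanged.

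Write $A_k=i c'_{2k-1}c'_{2k}$. The quartic monomials split into two kinds. The ``pair--pair'' monomials are $c'_S=\pm A_kA_l$ with $S=\{2k-1,2k,2l-1,2l\}$; the rest are ``generic''. For a Gaussian state with covariance in this normal form, Wick's theorem gives $\langle A_kA_l\rangle=\lambda_k\lambda_l$ and zero on generic monomials. We therefore decompose
\begin{equation*}
\lambda_{\max}(H)=\langle H\rangle_\psi=\sum_{k<l}\tilde w_{kl}\,\lambda_k\lambda_l+\sum_{|S|=4}w'_S\,\kappa_S ,
\end{equation*}
where $\tilde w_{kl}=\pm w'_S$ for the pair--pair set $S$. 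Here $\kappa_S=\langle c'_S\rangle_\psi-\lambda_k\lambda_l$ on pair--pair sets and $\kappa_S=\langle c'_S\rangle_\psi$ otherwise.

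The first term is multilinear in $(\lambda_1,\dots,\lambda_n)\in[-1,1]^n$. Its maximum over the cube is therefore attained at a vertex $\lambda\in\{\pm1\}^n$, which is the energy of a Fock state in the rotated modes, hence of a Gaussian state. So the first term is at most $G(H)$. For the second term, Cauchy--Schwarz gives $\sum_S w'_S\kappa_S\le \sigma\,(\sum_S\kappa_S^2)^{1/2}$. Thus \eqref{eq:main-additive} follows from a state-independent ``cumulant monogamy'' inequality: for every pure state in its own normal-form frame, $\sum_{|S|=4}\kappa_S^2\le 25n$.

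This last inequality is the main obstacle, since the raw sum $\sum_S\langle c'_S\rangle^2$ can be of order $n^2$ (already for Fock states). I would write $\sum_S\langle c'_S\rangle_\psi^2=\langle\psi\otimes\bar\psi|\sum_S c'_S\otimes\bar c'_S|\psi\otimes\bar\psi\rangle$ and express $\sum_{|S|=4}c'_S\otimes\bar c'_S$ as a polynomial in the quadratic Casimir-type operator $Q=\sum_a c'_a\otimes\bar c'_a$ (suitably Jordan--Wigner twisted). Its spectrum is explicitly computable and gives $\binom{d}{4}$-type weights. I would then subtract the Gaussian contribution $\sum_{k<l}(\lambda_k\lambda_l)^2$ together with the cross terms $2\sum\lambda_k\lambda_l\langle A_kA_l\rangle$. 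To control these I would use the elementary bounds $\sum_{a<b}\langle ic'_ac'_b\rangle^2\le n$ and $\sum_{k}\lambda_k^2\le n$, together with the fact that $M$ is block diagonal, so that all off-block quadratic moments vanish. The aim is to show that the large $\Theta(n^2)$ pieces cancel exactly and the residual is $O(n)$ with constant at most $25$.

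If the Casimir computation does not give enough cancellation, my fallback is to bound the generic and pair--pair contributions separately. For the pair--pair terms, $\sum_{k<l}(\langle A_kA_l\rangle-\langle A_k\rangle\langle A_l\rangle)^2$ is the squared Frobenius norm of the covariance matrix of the commuting $\pm1$ observables $A_k$, and I would bound it by $\Tr$ of that covariance squared, which is $O(n)$ when combined with the anticommutation constraints. For the generic terms, I would use anticommuting-family bounds of the form $\sum_j\langle B_j\rangle^2\le1$ for pairwise anticommuting $B_j$, after partitioning generic quartic monomials into $O(n)$ such families.
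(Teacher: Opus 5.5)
There is a genuine gap. Your first step is fine: the Gaussian with the same two-point function as $\ket\psi$ is a product state in the rotated modes, its energy is your multilinear first term, and the vertex argument bounds that term by $G(H)$. The failure is in the reduction that follows. The ``cumulant monogamy'' inequality $\sum_{|S|=4}\kappa_S^2\le 25n$ is false, so no Casimir computation can prove it.

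To see this, take $n\ge3$ and the cat state $\ket\psi=(\ket{0^n}+\ket{1^n})/\sqrt2$. Each $ic_ac_b$ with $a<b$ either equals some $A_k$, which takes the opposite values $\mp1$ on the two branches, or flips exactly two occupations. Hence every quadratic moment vanishes and all $\lambda_k=0$. The matched Gaussian is then the maximally mixed state, and $\kappa_S=\langle c'_S\rangle_\psi$ for every $S$.

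Because $\Lambda^4R$ is orthogonal,
$\sum_S\kappa_S^2=\sum_S\langle c_S\rangle_\psi^2\ge\sum_{k<l}\langle A_kA_l\rangle_\psi^2=\binom{n}{2}$.
Here $A_k=ic_{2k-1}c_{2k}$ in the original frame, and $\langle A_kA_l\rangle_\psi=1$ because $A_kA_l=1$ on both branches. This exceeds $25n$ once $n\ge52$.

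The loss is not just in the Cauchy--Schwarz step. The state $\ket\psi$ is a top eigenvector of $H=\sum_{k<l}A_kA_l=-\sum_{k<l}c_{\{2k-1,2k,2l-1,2l\}}$, for which $\sigma^2=\lambda_{\max}(H)=\binom{n}{2}$. Your first term is $0$, so your remainder equals $\sigma^2$, far above $5\sqrt n\,\sigma$.

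Your fallbacks fail on the same example:
\begin{itemize}
\item The covariance matrix of the commuting $A_k$ is the all-ones matrix, with squared Frobenius norm $n^2$, and no anticommutation constraint relates the $A_k$.
\item A family of pairwise anticommuting Majorana monomials has at most $2n+1$ members. Covering the $\Theta(n^4)$ generic quartic monomials therefore needs $\Omega(n^3)$ families, not $O(n)$.
\end{itemize}

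The missing idea is that the comparison state need not be Gaussian. Any convex mixture of pure Gaussian states still has energy at most $G(H)$, and mixtures can carry classical correlations that no single Gaussian with matched two-point function can. For the cat state with $n\ge5$, the mixture $\frac12(\ket{0^n}\bra{0^n}+\ket{1^n}\bra{1^n})$ has exactly the same quartic moment matrix.

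The paper constructs such a mixture $\omega$ for every $\rho$:
\begin{enumerate}
\item The maximal-overlap Gaussian and its variational stationarity conditions (\Cref{lem:local-overlap}) give $\lambda_{\max}(H_A)\le\max_g\Tr[AM^g]+2\|A\|_*$ for every coefficient matrix $A$.
\item Convex separation (\Cref{prop:ensemble}) turns this into $\|M^\rho-M^\omega\|_\op\le2$.
\item The operator inequality $R^2\preceq\frac n4H_{C^2}-ns^2I$, together with the Gaussian variance bound, upgrades this to $\|M^\rho-M^\omega\|_F\le5\sqrt{6n}$.
\item Cauchy--Schwarz against $\|C^H\|_F=\sqrt6\,\sigma$ finishes the proof.
\end{enumerate}
So the $O(n)$ bound on squared quartic deviations holds relative to a well-chosen Gaussian mixture, not relative to the moment-matched Gaussian.
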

This bound lets us infer that if $G(H)$ is large relative to $\lambda_{\max}(H)$, then the best Gaussian state is a good approximation; otherwise $\sqrt{n}\sigma$ is large, and an approximation with energy $\Omega(\sigma)$ gives an
$\Omega(1/\sqrt{n})$ approximation ratio. Thus our algorithm produces two candidate states to address each of these cases.   

The first candidate state is a Gaussian state delivering an $\Omega(1/\log n)$-approximation of the optimal Gaussian energy, produced by a classical approximation algorithm of Bravyi, Gosset, K\"onig, and Temme~\cite[Theorem~4]{BGKT}. Their algorithm is based on an approximation algorithm for quadratic optimization with orthogonality constraints by So~\cite{So}. A simple classical algorithm produces the desired second candidate state.
\begin{lemma}\label{lem:sparse-state}
There is a randomized polynomial-time classical algorithm that, given
a Hamiltonian $H$ of the form \eqref{eq:input} on $n\ge2$ modes, returns,
with probability at least $5/6$, a description of a state $\ket{\psi_F}$
satisfying
\begin{equation}
 \langle H\rangle_{\psi_F}\ge\sigma/16.
 \label{eq:baseline}
\end{equation}
The state is supported on at most
\begin{equation*}
 \min\left\{t+1,\ 1+\binom n2+\binom n4\right\}
\end{equation*}
Fock basis strings. For $H\ne0$, it is proportional to
$(16\sigma I+H)\ket{\vy}$ for some $\vy\in\{0,1\}^n$.
Its energy is efficiently computable classically.
The running time is polynomial in $n$ and the input length.
\end{lemma}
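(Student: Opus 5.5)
The plan is to pick $\vy\in\{0,1\}^n$ uniformly at random, form $\ket{\phi_\vy}=(aI+H)\ket\vy$ with $a=16\sigma$, and bound its normalized energy through moments of $H$ in the state $\ket\vy$. Write $h_k(\vy)=\bra\vy H^k\ket\vy$. Then
\begin{equation*}
 \langle H\rangle_{\phi_\vy}=\frac{a^2h_1+2ah_2+h_3}{a^2+2ah_1+h_2},
\end{equation*}
and the target $\langle H\rangle\ge\sigma/16$ is equivalent to $\Delta(\vy)\ge0$, where
\begin{equation*}
 \Delta(\vy):=a^2h_1+2ah_2+h_3-\tfrac{\sigma}{16}\bigl(a^2+2ah_1+h_2\bigr).
\end{equation*}
Here I used that the denominator equals $\|\phi_\vy\|^2>0$. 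I would assume $H\neq0$; when $H=0$ the bound is trivial.

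\textbf{Sparsity and computability.} Each $c_S$ with $|S|=4$ maps $\ket\vy$ to a phase times a Fock state differing from $\vy$ in $0$, $2$ or $4$ bits. Hence $\ket{\phi_\vy}$ is supported on $\vy$ together with at most $t$ other strings. Those strings also lie among the $1+\binom n2+\binom n4$ strings obtainable from $\vy$ by flipping $0$, $2$ or $4$ bits, which gives the stated support bound. The amplitudes are computed by applying each term, and $\langle H\rangle_{\phi}$ is then evaluated exactly by applying $H$ once more to a $\mathrm{poly}$-size vector. Exact computability is what makes amplification possible: sample $O(1)$ independent strings $\vy$ and keep the one of largest computed energy. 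So it suffices to show $\Pr_\vy[\Delta(\vy)\ge0]\ge p$ for a universal constant $p>0$.

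\textbf{First moment.} Averaging over uniform $\vy$ gives $\mathbb E\,h_k=\tau[H^k]$. Thus $\mathbb E h_1=0$ and $\mathbb E h_2=\sigma^2$, while Cauchy--Schwarz on the normalized trace gives $|\tau[H^3]|\le\sqrt{\tau[H^2]\tau[H^4]}$. I would invoke fermionic hypercontractivity for degree-$4$ Majorana polynomials, which gives $\tau[H^4]\le 3^4\sigma^4$ and more generally $\tau[H^{2m}]\le(2m-1)^{4m/2}\sigma^{2m}$. This yields $|\tau[H^3]|\le9\sigma^3$. With $a=16\sigma$ one obtains
\begin{equation*}
 \mathbb E\Delta=2a\sigma^2+\tau[H^3]-\tfrac{\sigma}{16}(a^2+\sigma^2)\ge(32-9-16-\tfrac1{16})\sigma^3>6\sigma^3.
\end{equation*}

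\textbf{Second moment and the main obstacle.} The main obstacle is that $\Delta$ has a large coefficient $\approx254\sigma^2$ on the mean-zero quantity $h_1$, so a positive mean alone does not control the probability. I would therefore bound $\mathbb E\Delta^2$ and apply the Paley--Zygmund inequality $\Pr[\Delta>0]\ge(\mathbb E\Delta)^2/\mathbb E\Delta^2$. Each $h_k$ has controlled second moment, since Cauchy--Schwarz gives $h_k^2=|\bra\vy H^k\ket\vy|^2\le h_{2k}$. Hence $\mathbb E h_k^2\le\tau[H^{2k}]\le C_k\sigma^{2k}$ by hypercontractivity, for $k=1,2,3$. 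Expanding $\Delta$ as a linear combination of $1,h_1,h_2,h_3$ with coefficients of order $\sigma^{3-k}$ then gives $\mathbb E\Delta^2\le C\sigma^6$ for a universal $C$. It follows that $\Pr[\Delta\ge0]\ge 36/C$, a constant, and repeating the sampling $O(\log 6\cdot C)$ times reaches success probability $5/6$.

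The delicate part is getting honest hypercontractive constants, in particular the moment $\tau[H^6]$. If these constants are too lossy for the specific numbers $16$ and $5/6$, the argument still gives a universal constant probability, and only the number of repetitions changes.
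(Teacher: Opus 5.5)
\textbf{The gap: your third-moment bound is not established.} Your setup is the paper's: with $F=16I+H/\sigma$ one has $\ket{\phi_\vy}=\sigma F\ket\vy$ and $\Delta(\vy)=\sigma^2 w(\vy)$. Your identity $\mathbb E\Delta=\tfrac{255}{16}\sigma^3+\tau[H^3]$ is correct, so everything hinges on showing $\tau[H^3]>-\tfrac{255}{16}\sigma^3$.

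That is where the argument breaks, because the hypercontractive estimate is misapplied. For a degree-$4$ Majorana polynomial, hypercontractivity gives $\|H\|_q\le(q-1)^{4/2}\|H\|_2$, i.e.\ $\tau[H^{2m}]\le(2m-1)^{4m}\sigma^{2m}$, not $(2m-1)^{2m}\sigma^{2m}$. In particular it only gives $\tau[H^4]\le 3^8\sigma^4$. Your Cauchy--Schwarz step then yields $|\tau[H^3]|\le 81\sigma^3$, and H\"older with $\|H\|_3\le 4\sigma$ gives $64\sigma^3$. Both far exceed the available slack $\tfrac{255}{16}\sigma^3\approx 15.9\,\sigma^3$. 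So you have not shown $\mathbb E\Delta>0$, and Paley--Zygmund then gives nothing.

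Your closing fallback does not rescue this. The constant $16$ controls the sign of $\mathbb E\Delta$, not merely the success probability, and it is fixed by the statement (both in $16\sigma I+H$ and in the target $\sigma/16$).

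The missing idea is a sharp third-moment bound that exploits the quartic structure. The paper proves $\tau[H^3]=\Tr[(C^H)^3]$, using $\tau[H^3]=\tau[H\,[H^2]_4]$, $[H^2]_4=H_{(C^H)^2}-6\sigma^2 I$, and $\tau[Hq_pq_r]=C^H_{pr}$. Hence $|\tau[H^3]|\le(\Tr[(C^H)^2])^{3/2}=6^{3/2}\sigma^3<15\sigma^3$, and $\mathbb E\Delta\ge\tfrac{15}{16}\sigma^3$.

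Separately, $\|\phi_\vy\|>0$ is not automatic. Take $H=-\sum q_{(2j-1,2j)}q_{(2k-1,2k)}$ over $256$ distinct mode pairs $j<k$. Then $\sigma=16$ and $H\ket{0^n}=-256\ket{0^n}=-16\sigma\ket{0^n}$, so $\phi_{0^n}=0$. You must therefore accept only when $\Delta(\vy)>0$ strictly, as the paper does; Paley--Zygmund bounds that event anyway.

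\textbf{Comparison once the third moment is repaired.} The rest of your route is then valid, and it differs from the paper's in how the acceptance probability is bounded. With correct constants, $\tau[H^4]\le 3^8\sigma^4$ and $\tau[H^6]\le 5^{12}\sigma^6$, your estimate $\mathbb E\Delta^2\le C\sigma^6$ holds for a universal, though astronomically large, $C$. Paley--Zygmund then gives acceptance probability at least $(15/16)^2/C$, so $O(1)$ trials suffice. The paper avoids hypercontractivity altogether. It uses the crude uniform bound $w(\vy)\le\sigma\sqrt t\,(16+\sqrt t)^2$, which follows from $\|H\|_\op\le\sqrt t\,\sigma$, in a reverse-Markov argument. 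That gives acceptance probability $\Omega(t^{-3/2})$ and $O(t^{3/2})$ trials: polynomial rather than constant, but entirely elementary.
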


\Cref{sec:coefficient} proves \Cref{thm:main}, \Cref{sec:filter} proves \Cref{lem:sparse-state}, and \Cref{sec:algorithm} proves \Cref{thm:general}. The algorithm of \Cref{thm:general} is as follows.

\paragraph{Algorithm.}
\begin{enumerate}[leftmargin=*,itemsep=2pt,topsep=3pt]
\item Run the algorithm of \Cref{lem:sparse-state} to obtain the
sparse Fock-basis candidate $\ket{\psi_F}$.
\item Run the Bravyi--Gosset--K\"onig--Temme Gaussian-state approximation algorithm twice independently and keep the Gaussian candidate with the higher energy.
\item Return the candidate state with higher energy, or the Gaussian candidate if the first step fails.
\end{enumerate}

\begin{remark}
The numerical constants in \eqref{eq:main-additive} and
\eqref{eq:general-ratio} can be improved without changing the algorithm.
We retain the constants above for the sake of simpler proofs.
\end{remark}

\subsection{Related work}
Beyond the related results discussed above, Herasymenko, Stroeks, Helsen, and Terhal~\cite{HSHT} give efficient
constant-factor Gaussian approximations for fermionic Hamiltonians with quadratic and quartic terms that are sparse, namely each Majorana occurs in a bounded number $k$ of terms. They also give an efficient constant-factor Gaussian approximation for sparse quartic SYK with bounded expected degree. Hothem, Parekh, and Thompson~\cite{HPT} improve
the dependence on $k$ in the approximation ratio. For fixed even $q$, these results generalize to sparse homogeneous degree-$q$ Hamiltonians. 

Stroeks, Terhal, and Herasymenko~\cite{STH}
show that Gaussian states achieve at least one third of the optimum
for traceless Hamiltonians with repulsive density--density interactions,
real number-conserving hopping, and on-site potentials. This guarantee is existential; they give efficient algorithms under additional assumptions, including a bipartite density--density interaction graph.

Ding, King, Kiani, and Anschuetz~\cite{DKKA} extend dense-SYK results~\cite{HTS,HO} to sparse random instances.
For quartic SYK with interaction retention probability
$p=\Omega(\log n/n^2)$, the Gaussian approximation ratio is
$\Theta(1/\sqrt{n})$ with high probability. They also show that the quantum
algorithm of Hastings and O'Donnell achieves a constant ratio under
the stronger condition $p=\Omega(\log n/n)$.

\subsection{Technical overview}

A real quartic moment matrix $M^\rho$ of a state $\rho$ is a symmetric matrix whose rows and columns are indexed by pairs of indices $i < j$, corresponding to quadratic Hermitian terms $ic_ic_j$, with
\begin{equation*}
M^\rho_{(i,j),(k,\ell)} = -\Re \langle c_ic_jc_kc_\ell\rangle_{\rho}.
\end{equation*}
Such moment matrices are central objects in non-commutative moment-sum-of-squares hierarchies. See \Cref{sec:pairs} for more details. A real symmetric $A$ of the same size as $M^\rho$ is called a coefficient matrix, and its entries represent coefficients of a quartic Hamiltonian $H_A$. Moreover,
\begin{equation}\label{eq:expectation-trace-OV}
\langle H_A\rangle_\rho=\Tr[AM^\rho],\quad \tau[H_A]=\Tr[A].   
\end{equation}

One of our main technical results shows that the moment matrix $M^\rho$ can be approximated with that of a mixture $\omega$ of Gaussian states:
\begin{equation}\label{eq:moment-bound-OV}
\|M^\rho-M^\omega\|_\op\le2,
\end{equation}
proven in \Cref{sec:ensemble}. 

\paragraph{PSD coefficient matrices.} Our bound \eqref{eq:moment-bound-OV} implies $M^\rho \preceq M^\omega + 2I$, and in particular 
\begin{equation}\label{eq:moment-coefficient-bound-OV}
\Tr[AM^\rho] \leq \Tr[AM^\omega] + 2\Tr[A],
\end{equation}
for positive semidefinite coefficient matrices $A$. In this case $H_A$ is a sum of squares of quadratic Majorana Hamiltonians. The bound \eqref{eq:moment-coefficient-bound-OV} along with \eqref{eq:expectation-trace-OV} gives 
\begin{equation}\label{eq:expectation-bound-OV}
 \langle H_A \rangle_\rho \leq \langle H_A \rangle_\omega + 2\tau[H_A].
\end{equation}
If $\omega^*$ is the Gaussian state maximizing $\langle H_A \rangle_\omega$, then $\langle H_A \rangle_{\omega^*} \geq \langle H_A \rangle_\omega$ and also
\begin{equation}\label{eq:maximally-mixed-bound-OV}
\langle H_A \rangle_{\omega^*} \geq \langle H_A \rangle_{2^{-n}I} = \tau[H_A],
\end{equation}
since the maximally mixed state, $2^{-n}I$ is a mixture of Gaussian states. \Cref{eq:expectation-bound-OV,eq:maximally-mixed-bound-OV} then give
\begin{equation*}
\langle H_A \rangle_{\omega^*} \geq \tfrac{1}{3}\langle H_A \rangle_\omega + \tfrac{2}{3}\tau[H_A] \geq \tfrac{1}{3}\langle H_A \rangle_\rho,
\end{equation*}
showing that there exists a Gaussian state earning $1/3$ the maximum energy. We improve this to $1/2$ using a stronger bound in \Cref{cor:psd}. 

\paragraph{A Frobenius-norm bound.}

For general non-PSD $A$, we use \eqref{eq:moment-bound-OV} to prove the
eigenvalue bound 
\begin{equation}\label{eq:eigenvalue-bound-OV}
 \lambda_{\max}(H)\le G(H)+5\sqrt n\,\sigma.
\end{equation}
of \Cref{thm:main}. The key step is converting
\eqref{eq:moment-bound-OV} into a Frobenius-norm bound:
\begin{equation}\label{eq:frobenius-bound-OV}
\|M^\rho-M^\omega\|_\op\le2 \implies \|M^\rho-M^\omega\|_F\le5\sqrt{6n},
\end{equation}
which is proven in \Cref{sec:coefficient}. We improve upon more generic relationships between these two norms by leveraging special properties of moment matrices arising from physical states. The $\Theta(\sqrt{n})$ factor in \eqref{eq:frobenius-bound-OV} carries through to \eqref{eq:eigenvalue-bound-OV} and is ultimately responsible for the approximation guarantee.

\Cref{sec:filter} proves \Cref{lem:sparse-state} by applying
$16\sigma I+H$ to a uniformly random Fock basis state, repeating until the normalized output has energy at least $\sigma/16$. Polynomially many independent trials suffice with probability at least 5/6. With \eqref{eq:eigenvalue-bound-OV} in hand, \Cref{sec:algorithm} combines this construction with the Gaussian-state approximation algorithm of Bravyi, Gosset, K\"onig, and Temme~\cite{BGKT} to prove \Cref{thm:general}.

\paragraph{Gaussian variance bounds.} Our main technical lemma, \Cref{lem:local-overlap}, approximates a state $\ket{\psi}$ by choosing a Gaussian $\ket{g}$ that maximizes the overlap $|\langle g|\psi\rangle|$, with its phase chosen so that $a := \langle g|\psi\rangle > 0$. For a quadratic Majorana Hamiltonian $Q$, the relevant approximation error is
\begin{equation*}
    \left| \frac{\langle g|Q^2|\psi\rangle}{a}-\langle Q^2\rangle_g \right|
 \le \Var_g(Q),
\end{equation*}
where the latter is the variance of $\ket{g}$ with respect to $Q$. The relevance to approximating eigenvalues is more apparent by choosing $\ket{\psi}$ to be a top eigenvector of $Q^2$, corresponding to its maximum eigenvalue. In this case we get
\begin{equation*}
 \lambda_{\max}(Q^2) = \frac{\langle g|Q^2|\psi\rangle}{a},
\end{equation*}
so that
\begin{equation*}
 \lambda_{\max}(Q^2)-G(Q^2)
 \le\frac{\langle g|Q^2|\psi\rangle}{a}-\langle Q^2\rangle_g
 \le \Var_g(Q).
\end{equation*}
This can be extended to a homogeneous quartic Hamiltonian $H$ by expressing it as a signed sum of squares of quadratic Hamiltonians, which is always possible and covered in \Cref{lem:scalar-comparison}. The main issue is that $\ket{g}$ now depends on $H$. In contrast, the Gaussian mixture $\omega$ in our desired bound \eqref{eq:moment-bound-OV} only depends on an arbitrary state $\rho$. We remove this dependence on $H$ by taking a suitable mixture of Gaussians in \Cref{prop:ensemble}.

\section{Preliminaries}\label{sec:preliminaries}
All states specified in ket notation have unit norm unless otherwise specified. We use the term vector when normalization is not assumed.
Adjoints are denoted by $\dagger$, transposes by $\trans$, and complex
conjugation by an overline. For a matrix $A$, let $A_i$ denote its $i$th column. Vector norms are Euclidean by default, and $\|\cdot\|_\op$ and
$\|\cdot\|_F$ denote operator and Frobenius norms. We say $A\preceq B$
when $B-A$ is positive semidefinite (PSD). The trace $\Tr$ is always
unnormalized, and $\tau=2^{-n}\Tr_{\mathcal H}$ is the normalized trace on the underlying Hilbert space $\mathcal{H}$. We use $I$ for the identity, with $I_{\mathcal H}$ for the identity operator on $\mathcal H$ and $I_k$ for the $k\times k$ identity matrix when the space or dimension needs to be specified. Unless stated otherwise, a Hamiltonian $H$ is homogeneous quartic, and the number of modes is $n\ge 2$.

As in Hastings--O'Donnell~\cite[Section~2.7]{HO}, we carry out numerical
optimization and energy evaluation to sufficiently small inverse-polynomial
additive accuracy after coefficient normalization. The approximation
guarantees have slack to absorb this numerical error.

\subsection{Majorana monomials}\label{sec:majorana-monomials}
The CAR imply
\begin{equation*}
 c_S^\dagger=(-1)^{|S|(|S|-1)/2}c_S,
 \quad c_S^2=(-1)^{|S|(|S|-1)/2}I.
\end{equation*}
In particular, quartic monomials are Hermitian involutions. The ordered monomials
form an orthonormal operator basis for the normalized trace inner product:
\begin{equation}
 \tau[c_S^\dagger c_T]=\delta_{S,T}.
 \label{eq:orthogonal-words}
\end{equation}
Indeed, every nonempty monomial anticommutes with some Majorana and hence has
zero trace, while $c_S^\dagger c_T$ is a signed monomial on
$S\mathbin\triangle T$. We use the notation $[X]_k$ for the degree-$k$ part of the Majorana
expansion of $X$. Since quartic monomials satisfy
$c_Sc_T=(-1)^{|S\cap T|}c_Tc_S$, the square of a homogeneous quartic $R=\sum_{S:|S|=4}r_Sc_S$ with real coefficients $r_S$
has only degrees zero, four, and eight, with $[R^2]_0=\tau[R^2]I$. Trace orthonormality~\eqref{eq:orthogonal-words} implies $\tau[R^2]=\sum_{|S|=4}r_S^2$, which is also used in \eqref{eq:scales}.

\subsection{Moment and coefficient matrices}\label{sec:pairs}

Moment matrices arise naturally in non-commutative moment--sum-of-squares
hierarchies of semidefinite programs (SDPs), where they are dual to coefficient matrices through the trace
inner product (see \eqref{eq:moment-SoS-duality} below). For background see
Pironio, Navascu\'es, and Ac\'in~\cite[Sections~3.1 and~3.4]{PNA2010}. For an account in the setting of local Hamiltonians, see
Parekh and Thompson~\cite[Section~2.2 and Appendix~A]{PT2022}. Our analysis works directly with moment matrices and shows that the quartic moment matrix of a state can be approximated by that of a Gaussian mixture.

\begin{definition}[Quadratic moments and quartic moment matrix]
\label{def:moments}
Let $\pairs=\{(i,j):1\le i<j\le d\}$, $m=\binom d2$, and
$q_{ij}=ic_ic_j$. Set $\boldsymbol q=(q_p)_{p\in\pairs}$. For $\vx\in\mathbb R^{\pairs}$ and a real symmetric
$m\times m$ coefficient matrix $A$, define
\begin{equation}
 Q_{\vx}=\sum_{p\in\pairs}x_pq_p,
 \quad H_A=\boldsymbol q^{\trans}A\boldsymbol q
            :=\sum_{p,r\in\pairs}A_{pr}q_pq_r.
 \label{eq:SA}
\end{equation}
The operators $q_p$, $Q_{\vx}$, and $H_A$ act on $\mathcal H$ and are Hermitian. For any Hermitian operator $X$ on $\mathcal H$,
define its real vector of quadratic moments and real symmetric
\emph{quartic moment matrix} by
\begin{equation*}
 \mu^X=(\,\Tr[Xq_p]\,)_{p\in E},
 \quad M^X=(\,\Re\Tr[Xq_pq_r]\,)_{p,r\in E},
\end{equation*}
respectively. We will frequently take $X$ to be a state $\rho$. For a
state $\ket{\psi}$, we abbreviate $\mu^{\ket\psi\bra\psi}$ and
$M^{\ket\psi\bra\psi}$ by $\mu^\psi$ and $M^\psi$, respectively.
\end{definition}

For $p=(i,j)$ and $r=(k,\ell)$, the CAR give
$q_pq_r=-c_ic_jc_kc_\ell$ and
$q_pq_r=(-1)^{|\{i,j\}\cap\{k,\ell\}|}q_rq_p$.
Thus $q_pq_r$ is Hermitian when the overlap between $p$ and $r$ is even
and skew-Hermitian when it is odd. Its trace against the Hermitian
operator $X$ is correspondingly real or purely imaginary, so
\begin{equation}
 M^X_{(i,j),(k,\ell)}=-\Re\Tr[Xc_ic_jc_kc_\ell]=
 \begin{cases}
 -\Tr[Xc_ic_jc_kc_\ell]& |\{i,j\}\cap\{k,\ell\}|=0,\\
 0& |\{i,j\}\cap\{k,\ell\}|=1,\\
  \Tr[X]& (i,j)=(k,\ell).\\
 \end{cases}
 \label{eq:moment-entry}
\end{equation}
Thus $M^X$ records identity
and quartic moments, and $\mu^X$ records quadratic moments.
Each quartic moment appears three times in the upper triangle of $M^X$
and six times in the full matrix, up to the sign needed to order its four
indices. In particular, when $i<j<k<\ell$ we have
$M^X_{(i,j),(k,\ell)}=-\Tr[Xc_ic_jc_kc_\ell]$ by \eqref{eq:moment-entry}.

For a density matrix $\rho$, $M^\rho$ has unit diagonal entries. \Cref{def:moments} implies the identity
\begin{equation}
 \langle H_A\rangle_\rho=\Tr[AM^\rho].
 \label{eq:moment-SoS-duality}
\end{equation} 
Moreover, $\tau[H_A]=\Tr[A]$ by taking $\rho = 2^{-n}I$.

Since $Q_{\vx}^2 = H_{\vx\vx^{\trans}}$, this gives
\begin{equation*}
 \langle Q_{\vx}^2\rangle_\rho
 =\Tr[\vx\vx^{\trans}M^\rho]
 =\vx^{\trans}M^\rho\vx,
 \quad \langle Q_{\vx}\rangle_\rho=\vx^{\trans}\mu^\rho.
\end{equation*}
Consequently,
\begin{equation}
 \vx^{\trans}\bigl(M^\rho-\mu^\rho(\mu^\rho)^{\trans}\bigr)\vx
 =\langle Q_{\vx}^2\rangle_\rho-\langle Q_{\vx}\rangle_\rho^2
 =\bigl\langle(Q_{\vx}-\langle Q_{\vx}\rangle_\rho I)^2\bigr\rangle_\rho
 =\Var_\rho(Q_{\vx})\ge0.
 \label{eq:moment-psd}
\end{equation}
If $A=\sum_r\vx_r\vx_r^{\trans}\succeq0$,
then $H_A=\sum_r Q_{\vx_r}^2\succeq0$ is a sum of squares of quadratic Hamiltonians.

An appropriately rescaled moment matrix
also gives a coefficient matrix for homogeneous quartic Majorana polynomials.
\begin{definition}[Coefficient matrix of a homogeneous quartic]\label{def:quartic-coefficient}
Let $R=\sum_{S:|S|=4}r_Sc_S$ have real coefficients, and let
\begin{equation*}
 C^R=M^{2^{-n}R},\quad s^2=\tau[R^2]=\sum_{S:|S|=4}r_S^2.
\end{equation*}
Then $C^R$ is the \emph{coefficient matrix} of $R$ in the sense that $H_{C^R/6} = R$. 
\end{definition}

\begin{claim}\label{clm:quartic-moments}
The entry $C^R_{pr}=0$ whenever $p,r$ share an index, and
\begin{subequations}
\begin{align}
 \Tr[(C^R)^2]&=6s^2, \label{eq:quartic-coefficient-norm}\\
 H_{C^R}&=6R, \label{eq:quartic-reconstruction}\\
 \tau[Rq_pq_r]&=C^R_{pr}. \label{eq:quartic-trace-pairing}
\end{align}
\end{subequations}
\end{claim}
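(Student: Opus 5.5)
The plan is to compute $C^R$ entrywise from \eqref{eq:moment-entry} applied to $X=2^{-n}R$, and then derive the three identities directly from trace orthonormality \eqref{eq:orthogonal-words}. First, $\Tr[X]=\tau[R]=0$ because $R$ has no degree-zero part, so the diagonal entries vanish. Entries with overlap one vanish by \eqref{eq:moment-entry}. Hence $C^R_{pr}=0$ whenever $p$ and $r$ share an index. For disjoint $p=(i,j)$ and $r=(k,\ell)$, write $S=\{i,j,k,\ell\}$ and $q_pq_r=-c_ic_jc_kc_\ell=\epsilon_{pr}\,c_S$ with a sign $\epsilon_{pr}=\pm1$ from reordering. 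Orthonormality, together with the fact that $c_S$ is Hermitian, gives $\tau[c_S R]=r_S$. Therefore
\begin{equation*}
C^R_{pr}=\tau[Rq_pq_r]=\epsilon_{pr}\,r_S .
\end{equation*}
Here the trace is already real, so the $\Re$ is harmless. Together with the zero cases, this proves \eqref{eq:quartic-trace-pairing} for all $p,r$. In the overlap-one case, $q_pq_r$ is a signed degree-two monomial, so its trace against $R$ vanishes; on the diagonal, $\tau[R]=0$.

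For \eqref{eq:quartic-coefficient-norm}, I will sum $(C^R_{pr})^2=r_S^2$ over ordered pairs of disjoint pairs. Each $4$-set $S$ splits into an unordered pair of disjoint pairs in $3$ ways, giving $6$ ordered pairs $(p,r)$. Since $C^R$ is symmetric, $\Tr[(C^R)^2]=\sum_{p,r}(C^R_{pr})^2=6\sum_S r_S^2=6s^2$.

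For \eqref{eq:quartic-reconstruction}, I will expand
\begin{equation*}
H_{C^R}=\sum_{p,r}C^R_{pr}q_pq_r=\sum_{p,r\ \text{disjoint}}\epsilon_{pr}r_S\,\epsilon_{pr}c_S .
\end{equation*}
Since $\epsilon_{pr}^2=1$, each of the $6$ ordered splittings of $S$ contributes $r_Sc_S$, so the sum is $6R$. Dividing by $6$ gives the stated $H_{C^R/6}=R$.

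The only delicate step is the sign bookkeeping. The key point is that the same sign $\epsilon_{pr}$ appears both when the entry is computed and when the monomial is reconstructed, so it cancels. I will also check that $\epsilon_{pr}$ depends only on the ordered pair through $q_pq_r$. This holds, and disjoint $q_p,q_r$ commute, so $\epsilon_{pr}=\epsilon_{rp}$, consistent with the symmetry of $C^R$. Everything else is direct counting.
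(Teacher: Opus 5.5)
Your proposal is correct and follows essentially the same route as the paper. You compute the disjoint-pair entries as $\pm r_S$ via trace orthonormality and dispose of the diagonal and overlap-one cases separately. You then count the six ordered splittings of each $4$-set to obtain $\Tr[(C^R)^2]=6s^2$ and $H_{C^R}=6R$, with the sign cancelling exactly as in the paper's identity $C_{pr}q_pq_r=r_Sc_S$.
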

\begin{proof}
Take $C=C^R$. For disjoint $p=(i,j)$ and $r=(k,\ell)$, let $S=\{i,j,k,\ell\}$ and
$c_ic_jc_kc_\ell=\varepsilon c_S$, where $\varepsilon\in\{\pm1\}$.
By \eqref{eq:orthogonal-words} and \eqref{eq:moment-entry},
$C_{pr}=-\varepsilon r_S$, so $C_{pr}q_pq_r=r_Sc_S$.
If $p=r$, then $\tau[Rq_pq_r]=\tau[R]=0$. If the pairs share exactly
one index, their product is proportional to a quadratic Majorana monomial and is
orthogonal to $R$.
This proves that $\tau[Rq_pq_r]$ is real and equals $C_{pr}$ for \eqref{eq:quartic-trace-pairing}. For \eqref{eq:quartic-coefficient-norm} and \eqref{eq:quartic-reconstruction},
each four-element set $S$ has three pairings and their transposes, contributing
$6r_S^2$ to $\Tr[C^2]$ and $6r_Sc_S$ to $H_C$.
\end{proof}

\subsection{Gaussian states}\label{sec:gaussian}
We present the features of fermionic Gaussian states necessary for our results. For a broader background on fermionic operators and the Jordan--Wigner transform, see Nielsen~\cite{Nielsen2005}, and for Gaussian states, see Bravyi~\cite{Bravyi}.

In addition to a Gaussian state, our algorithm constructs a sparse state in a fixed Fock basis: 
\begin{equation*}
 a_j=(c_{2j-1}+ic_{2j})/2,\quad
 \ket{\vy}=(a_1^\dagger)^{y_1}\cdots(a_n^\dagger)^{y_n}\ket{0^n},
 \quad \vy\in\{0,1\}^n,
\end{equation*}
where $\ket{0^n}$ is the common vacuum of the annihilation operators $a_j$.
Thus $ic_{2j-1}c_{2j}=2a_j^\dagger a_j-I$.

\begin{definition}[Pure Gaussian states]\label{def:pure-gaussian}
A state $\ket{g}$ is \emph{Gaussian} if, for some matrix $O$ in
$O(d)$, the real orthogonal group, it is the common vacuum of
$\widetilde a_j=(\widetilde c_{2j-1}+i\widetilde c_{2j})/2$, where
$\widetilde c_a=\sum_iO_{ai}c_i$. We denote this vacuum by $\ket{g}=\ket{\widetilde{0}^{n}}$, so $\widetilde a_j\ket{\widetilde{0}^{n}}=0$ for all $j$. Equivalently, its projector is
\begin{equation}
 \ket{g}\bra{g}
 =\prod_{j=1}^n\frac{I-i\widetilde c_{2j-1}\widetilde c_{2j}}2.
 \label{eq:gaussian-projector-product}
\end{equation}
The cone $\mathcal C_G$ consists of all complex scalar multiples of
Gaussian states, including zero. Its elements need not be normalized.
\end{definition}
The operators $\widetilde c_i$ also satisfy the CAR: they are Hermitian because $O$ is real, and orthogonality gives
\begin{equation*}
 \{\widetilde c_i,\widetilde c_j\}
 =\sum_{k,\ell}O_{ik}O_{j\ell}\{c_k,c_\ell\}
 =2\sum_k O_{ik}O_{jk}I=2\delta_{ij}I.
\end{equation*}
The calculation above applies to every $O\in O(d)$.
The choice $\det O=1$ gives even fermion parity, meaning support only on
Fock basis states with even $\sum_j y_j$; $\det O=-1$ gives odd fermion
parity, with support only on those with odd $\sum_j y_j$.
Allowing both signs includes every Fock basis state. The product formula~\eqref{eq:gaussian-projector-product}~\cite[Section~IV, Equation~(19)]{Bravyi}
expresses the rank-one Gaussian projectors $\ket{g}\bra{g}$ as a
continuous image of the compact group $O(d)$, hence a compact set. This ensures attainment of the maximum overlap used in
\Cref{lem:local-overlap,lem:scalar-comparison}. It also makes the convex hull of Gaussian moment matrices compact, as needed for the strict-separation argument in \Cref{prop:ensemble}. A convex mixture of these projectors need not itself be Gaussian.

\begin{definition}[Gaussian covariance matrix]
\label{def:gaussian-covariance}
The covariance matrix of a Gaussian $\ket{g}$ is the real skew $d\times d$ matrix
\begin{equation*}
 \Gamma^g = (\ \tfrac i2\langle c_ic_j-c_jc_i\rangle_g\ )_{i,j \in [d]}.
\end{equation*}
Thus 
\begin{equation}
 \langle c_ic_j\rangle_g=\delta_{ij}-i\Gamma^g_{ij},
 \label{eq:two-point}
\end{equation}
and the upper-triangular entries of $\Gamma^g$ correspond to $\mu^g$.
\end{definition}
The standard covariance matrix characterization~\cite[Section~IV]{Bravyi} gives
\begin{equation}
 (\Gamma^g)^2=-I_d,\quad \Gamma^g(\Gamma^g)^{\trans}=I_d,\quad
 \|\mu^g\|^2=\tfrac12\|\Gamma^g\|_F^2=n.
 \label{eq:gamma-pure}
\end{equation}
Conversely, every real skew matrix $\Gamma$ with $\Gamma^2=-I_d$ is a pure
Gaussian covariance matrix. In the rotated Majorana basis from
\Cref{def:pure-gaussian}, the covariance matrix $O\Gamma^gO^{\trans}$
is block diagonal with all blocks
$\bigl(\begin{smallmatrix}0&-1\\1&0\end{smallmatrix}\bigr)$.

Wick's theorem lets us compute all moments of $\ket{g}$ from its covariance matrix.
We need only its quartic case~\cite[Section~IV, equation~(17)]{Bravyi},
extended to account for repeated indices using the CAR: for any $i,j,k,\ell\in[d]$,
\begin{equation}
 \langle c_ic_jc_kc_\ell\rangle_g
 =\langle c_ic_j\rangle_g\langle c_kc_\ell\rangle_g
  -\langle c_ic_k\rangle_g\langle c_jc_\ell\rangle_g
  +\langle c_ic_\ell\rangle_g\langle c_jc_k\rangle_g.
 \label{eq:wick-four}
\end{equation}
Any Gaussian state $\ket{g}$ we produce is represented using its quadratic moments $\mu^g$. The above formulas suffice to compute its quartic moments, and hence the energies, used here, in polynomial time. This representation allows for computing a broad range of properties of $\ket{g}$ and also preparing $\ket{g}$ in polynomial-time on a quantum computer.

\begin{lemma}[Gaussian variance bound]\label{lem:uniform-variance}
For a Gaussian $\ket{g}$, let $V^g=M^g-\mu^g(\mu^g)^{\trans}$. Then
\begin{equation*}
 0\preceq V^g\preceq2I_m,
\end{equation*}
and in particular
\begin{equation}
 \Var_g(Q_{\vx})=\vx^{\trans}V^g\vx\le2\|\vx\|^2.
 \label{eq:uniform-variance}
\end{equation}
\end{lemma}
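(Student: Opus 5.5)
The plan is to compute $V^g$ explicitly using Wick's theorem \eqref{eq:wick-four}, and to reduce to a canonical basis by orthogonal rotation. Lower bound first: $0\preceq V^g$ is immediate from \eqref{eq:moment-psd}, since $\vx^\trans V^g\vx=\Var_g(Q_\vx)\ge0$ for every real $\vx$. For the upper bound it suffices to show $\Var_g(Q_\vx)\le 2\|\vx\|^2$ for all $\vx\in\mathbb R^\pairs$. The identity $\Var_g(Q_\vx)=\vx^\trans V^g\vx$ is again \eqref{eq:moment-psd}, so \eqref{eq:uniform-variance} follows once the operator bound is established.

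Write $Q_\vx=\tfrac i2\sum_{a,b}X_{ab}c_ac_b$ with $X$ the real skew $d\times d$ matrix having $X_{ij}=x_{(i,j)}$ for $i<j$. Then $\|X\|_F^2=2\|\vx\|^2$. Under the rotation $\widetilde c=Oc$ of \Cref{def:pure-gaussian}, $Q_\vx$ becomes $\tfrac i2\sum \widetilde X_{ab}\widetilde c_a\widetilde c_b$ with $\widetilde X=OXO^\trans$, which has the same Frobenius norm. We may therefore assume $\ket g=\ket{0^n}$, whose covariance is block diagonal with blocks $\bigl(\begin{smallmatrix}0&-1\\1&0\end{smallmatrix}\bigr)$, that is, $\langle c_{2j-1}c_{2j}\rangle_g=-i$ up to the paper's sign convention.

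Now expand $\Var_g(Q_\vx)=\langle Q_\vx^2\rangle_g-\langle Q_\vx\rangle_g^2$ by Wick. The first Wick term cancels $\langle Q_\vx\rangle_g^2$, leaving the two cross-contraction terms. These give the standard formula
\[
\Var_g(Q_\vx)=\tfrac14\Tr\bigl[X^\trans X+\Gamma X\Gamma X\bigr]=\tfrac14\|X\|_F^2+\tfrac14\Tr[\Gamma X\Gamma X],
\]
up to sign conventions that I will check on the example $Q=ic_1c_3$ (variance $1$) versus $Q=ic_1c_2$ (variance $0$). Since $\Gamma$ is orthogonal, Cauchy--Schwarz gives $|\Tr[\Gamma X\Gamma X]|\le\|\Gamma X\|_F\|X\Gamma\|_F=\|X\|_F^2$. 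Hence
\[
\Var_g(Q_\vx)\le\tfrac12\|X\|_F^2=\|\vx\|^2\le2\|\vx\|^2.
\]
Even a crude version without cancellation yields $2\|\vx\|^2$. An alternative avoiding sign bookkeeping is to note $\Var_g(Q_\vx)=\|(Q_\vx-\langle Q_\vx\rangle)\ket g\|^2$ and compute directly in the Fock basis. Only terms $ic_ac_b$ with $\{a,b\}$ not a block pair $\{2j-1,2j\}$ move $\ket{0^n}$, mapping it to states $\pm a_k^\dagger a_l^\dagger\ket{0^n}$ with $k\neq l$. Each such two-particle state receives contributions from four coefficients, with amplitude at most $\sum$ of four $|x_p|$. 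Cauchy--Schwarz then gives norm squared at most $4\sum x_p^2$ divided by the appropriate factor.

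The main obstacle is getting the Wick signs and constants right so that the final constant is at most $2$. I would first carry out the Fock-basis computation on the single two-mode block to fix the constant, then sum over the orthogonal two-particle states.
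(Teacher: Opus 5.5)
Your plan follows the paper's route: the lower bound from \eqref{eq:moment-psd}, Wick's theorem to make $\vx^\trans V^g\vx$ explicit, then Cauchy--Schwarz using only that $\Gamma$ is orthogonal. That last step means the rotation to $\ket{0^n}$ is not needed. However, the formula at the center of the argument is wrong, and it leads you to a false intermediate claim. Expanding with $\langle c_ac_b\rangle_g=\delta_{ab}-i\Gamma_{ab}$, the two cross-contractions give
\[
\Var_g(Q_\vx)=\tfrac12\Tr\bigl[X^\trans X-\Gamma X\Gamma X\bigr]=\|\vx\|^2-\tfrac12\Tr[X\Gamma X\Gamma].
\]
This is the paper's \eqref{eq:gaussian-variance-form} with $K=X$, since $\langle K,\Gamma^\trans K\Gamma\rangle_F=\Tr[K\Gamma K\Gamma]$. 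Your version, with $\tfrac14$ and a plus sign, fails both checks you proposed: it gives $\tfrac12$ for $ic_1c_3$ and $1$ for $ic_1c_2$, instead of $1$ and $0$. With the correct formula, Cauchy--Schwarz and orthogonality of $\Gamma$ give $|\Tr[X\Gamma X\Gamma]|\le\|X\|_F^2=2\|\vx\|^2$, and hence exactly $\Var_g(Q_\vx)\le2\|\vx\|^2$.

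Your conclusion $\Var_g(Q_\vx)\le\|\vx\|^2$ is false, and the constant $2$ is sharp. Take $\ket{g}=\ket{0^n}$ and $Q=ic_1c_3-ic_2c_4=2i(a_1a_2+a_1^\dagger a_2^\dagger)$. Then $\langle Q\rangle_g=0$ and $Q\ket{0^n}=2ia_1^\dagger a_2^\dagger\ket{0^n}$, so $\Var_g(Q)=4=2\|\vx\|^2$. Here $X\Gamma=-\Gamma X$, which is the equality case of Cauchy--Schwarz. So no step may lose a constant factor, and your remark that a crude version still yields $2\|\vx\|^2$ cannot be right.

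The same problem breaks your Fock-basis alternative as sketched. Bounding the amplitude on $a_k^\dagger a_l^\dagger\ket{0^n}$ by the sum of four $|x_p|$ and applying Cauchy--Schwarz gives $4\|\vx\|^2$, and there is no factor to divide by. The fix is to note that this amplitude is, up to the global phase $i$, equal to $(x_{(2k-1,2l-1)}-x_{(2k,2l)})+i(x_{(2k-1,2l)}+x_{(2k,2l-1)})$. Its squared modulus is therefore at most twice the sum of the four squares. The states $a_k^\dagger a_l^\dagger\ket{0^n}$ with $k<l$ are orthonormal, and the block pairs contribute only to $\ket{0^n}$, which centering removes. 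Summing then gives $\Var_g(Q_\vx)\le2\|\vx\|^2$.

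Done this way, after your rotation to $\ket{0^n}$, the Fock computation is a valid and more elementary alternative to Wick's theorem. It also yields an exact formula for the variance that makes the equality cases transparent.
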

\begin{proof}
Take $\Gamma=\Gamma^g$ for convenience. Substituting \eqref{eq:two-point} into
\eqref{eq:wick-four} gives
\begin{equation*}
\begin{aligned}
 \Re\langle c_ic_jc_kc_\ell\rangle_g 
 &=\delta_{ij}\delta_{k\ell}-\delta_{ik}\delta_{j\ell}
   +\delta_{i\ell}\delta_{jk}\\
 &\quad-\Gamma_{ij}\Gamma_{k\ell}
   +\Gamma_{ik}\Gamma_{j\ell}
   -\Gamma_{i\ell}\Gamma_{jk},
\end{aligned}
\end{equation*}
since $\Re (\delta_{st}-i\Gamma_{st})(\delta_{uv}-i\Gamma_{uv}) =\delta_{st}\delta_{uv} - \Gamma_{st}\Gamma_{uv}$. For $i<j$ and $k<\ell$, the above and \eqref{eq:moment-entry} yield
\begin{equation*}
  V^g_{(i,j),(k,\ell)} = 
  -\Re\langle c_ic_jc_kc_\ell\rangle_g - \mu^g_{ij}\mu^g_{k\ell} 
  =\delta_{ik}\delta_{j\ell}-\Gamma_{ik}\Gamma_{j\ell}
   +\Gamma_{i\ell}\Gamma_{jk},
\end{equation*}
since $\mu^g_{ij}\mu^g_{k\ell} = \Gamma_{ij}\Gamma_{k\ell}$ and $\delta_{ij}\delta_{k\ell}=\delta_{i\ell}\delta_{jk}=0$. Thus, for
$\vx\in\mathbb R^{\pairs}$,
\begin{equation}
 \vx^{\trans}V^g\vx
 =\|\vx\|^2-
   \sum_{i<j}\sum_{k<\ell}x_{ij}x_{k\ell}
       (\Gamma_{ik}\Gamma_{j\ell}-\Gamma_{i\ell}\Gamma_{jk}).
 \label{eq:gaussian-variance-form}
\end{equation}
Let $K\in\mathbb R^{d\times d}$ be skew-symmetric with $K_{ij}=x_{ij}$
for $i<j$. Then $\|K\|_F^2=2\|\vx\|^2$. Pairing the $(k,\ell)$
and $(\ell,k)$ terms and using the skew-symmetry of $\Gamma$ gives
\begin{equation*}
 (\Gamma^{\trans} K\Gamma)_{ij}
 = (\Gamma_i)^{\trans} K (\Gamma_j) = \sum_{k<\ell}x_{k\ell}
       (\Gamma_{ik}\Gamma_{j\ell}-\Gamma_{i\ell}\Gamma_{jk}).
\end{equation*}
Both $K$ and $\Gamma^{\trans}K\Gamma$ are skew-symmetric, so their
Frobenius inner product counts each upper-triangular entry twice. Hence
\begin{equation*}
 \sum_{i<j}\sum_{k<\ell}x_{ij}x_{k\ell}
       (\Gamma_{ik}\Gamma_{j\ell}-\Gamma_{i\ell}\Gamma_{jk})
 =\tfrac12\langle K,\Gamma^{\trans} K\Gamma\rangle_F,
\end{equation*}
where $\langle U,W\rangle_F=\Tr[U^{\trans}W]$.
By \eqref{eq:gamma-pure}, $\Gamma$ is orthogonal, so Cauchy--Schwarz gives
\begin{equation*}
 \tfrac12\bigl|\langle K,\Gamma^{\trans}K\Gamma\rangle_F\bigr|
 \le\tfrac12\|K\|_F\|\Gamma^{\trans}K\Gamma\|_F
 =\tfrac12\|K\|_F^2=\|\vx\|^2.
\end{equation*}
Equation~\eqref{eq:gaussian-variance-form} therefore gives
$\vx^{\trans}V^g\vx\le2\|\vx\|^2$.
The lower bound follows from \eqref{eq:moment-psd}.
\end{proof}

We will need the Gaussian cone $\mathcal C_G$ to be closed under
certain operations. An operator is \emph{even} if its Majorana expansion
contains only even-degree monomials. The following lemma also applies to mixed Gaussian states, but we use it only for pure Gaussian states.

\begin{lemma}[Gaussian closure {\cite[Theorem~1 and Corollary~2]{Bravyi}}]\label{lem:gaussian-operator-criterion}
Let $\Lambda=\sum_{k=1}^d c_k\otimes c_k$ act on
$\mathcal H\otimes\mathcal H$. If an even operator $X$ satisfies
$[\Lambda,X\otimes X]=0$, then for every Gaussian density matrix $\rho$
with $X\rho X^\dagger\ne0$, the normalized density matrix
\begin{equation*}
 \frac{X\rho X^\dagger}{\Tr[X\rho X^\dagger]}
\end{equation*}
is Gaussian.
\end{lemma}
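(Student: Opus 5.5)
The plan is to reduce the statement to Bravyi's commutation criterion for Gaussianity and then check that the criterion is preserved under conjugation by $X$. The criterion~\cite[Theorem~1]{Bravyi} says that an even density matrix $\sigma$ is Gaussian if and only if $[\Lambda,\sigma\otimes\sigma]=0$. Here $\mathcal H\otimes\mathcal H$ is taken with the fermionic (graded) tensor product on $2d$ Majoranas, so that the $c_k\otimes c_k$ make sense. Granting the criterion, the argument runs in four short steps.

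\textbf{Step 1: the adjoint also commutes.} With the standard convention, $\Lambda$ is Hermitian up to a global phase, since each term is a product of two Hermitian Majoranas from different copies. Taking adjoints in $[\Lambda,X\otimes X]=0$ therefore gives $[\Lambda,X^\dagger\otimes X^\dagger]=0$. The sign conventions of the graded tensor product are harmless here because $X$ is even, so $(X\otimes X)^\dagger=X^\dagger\otimes X^\dagger$ with no extra signs.

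\textbf{Step 2: the criterion is preserved.} Let $\rho$ be Gaussian. Then
\begin{equation*}
 [\Lambda,X\rho X^\dagger\otimes X\rho X^\dagger]
 =(X\otimes X)\,[\Lambda,\rho\otimes\rho]\,(X^\dagger\otimes X^\dagger)=0,
\end{equation*}
where Step~1 lets $\Lambda$ pass through $X\otimes X$ on the left and $X^\dagger\otimes X^\dagger$ on the right.

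\textbf{Step 3: normalize.} The operator $X\rho X^\dagger$ is even (a product of even operators), PSD, and nonzero by hypothesis, so it has positive trace. Dividing by that trace multiplies $\sigma\otimes\sigma$ by a positive scalar, which preserves the commutation. The converse direction of the criterion then shows that the normalized operator is Gaussian.

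\textbf{Step 4: the criterion itself.} The real content, and the expected main obstacle, is proving Bravyi's criterion.

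For the forward direction (Gaussian implies commuting), note that $\Lambda$ is invariant under simultaneous rotation of both copies: $\sum_k \widetilde c_k\otimes\widetilde c_k=\Lambda$ because $O^{\trans}O=I$. By \Cref{def:pure-gaussian}, every Gaussian state is a rotation of a product of the commuting projectors $(I-i\widetilde c_{2j-1}\widetilde c_{2j})/2$, and more generally of the factors $(I-i\lambda_j\widetilde c_{2j-1}\widetilde c_{2j})/2$ for mixed states. It therefore suffices to check single-mode commutation with $c_{2j-1}\otimes c_{2j-1}+c_{2j}\otimes c_{2j}$. This is a $4\times4$ computation, and the terms from other modes commute because everything is even.

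For the converse, I would pass to the Grassmann representation, mapping $c_k\mapsto\theta_k$ on the first copy and $c_k\mapsto\eta_k$ on the second. Under this map, commutation with $\Lambda$ becomes the first-order differential identity
\begin{equation*}
 \sum_k\Bigl(\theta_k\frac{\partial}{\partial\eta_k}-\eta_k\frac{\partial}{\partial\theta_k}\Bigr)\bigl(f(\theta)f(\eta)\bigr)=0,
\end{equation*}
where $f$ is the Grassmann kernel of $\sigma$. I would then show that this identity forces $f$ to satisfy $\partial_k f=\bigl(\sum_j M_{kj}\theta_j\bigr)f$ for some matrix $M$. Its only solutions are Gaussian exponentials $\exp\bigl(\tfrac{i}{2}\theta^{\trans}M\theta\bigr)$, or degenerate limits of them, which is exactly Gaussianity. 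Extracting this linear relation is done by comparing coefficients of each degree in $\eta$, and this is the delicate part of the argument.
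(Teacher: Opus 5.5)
Your proposal is correct and follows the cited route. It combines Bravyi's Theorem~1 (an even operator $Y$ is Gaussian iff $[\Lambda,Y\otimes Y]=0$) with the observation in your Steps~1--3 that, for even operators, $(X\rho X^\dagger)\otimes(X\rho X^\dagger)=(X\otimes X)(\rho\otimes\rho)(X^\dagger\otimes X^\dagger)$, so conjugation by $X$ preserves commutation with $\Lambda$; this is how the closure corollary the paper cites is obtained. Step~4 can be simplified: you only need the converse for trace-one operators, i.e.\ with nonzero constant Grassmann coefficient $C$, and there comparing the $\eta$-linear terms of your identity immediately gives $C\,\partial_k f=(A\theta)_k f$ with $A$ the quadratic coefficient matrix of $f$, so the delicate degenerate case never arises (and degenerate Gaussians, contrary to your remark, do not satisfy a relation of the form $\partial_k f=(M\theta)_k f$ anyway).
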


\begin{corollary}\label{lem:gaussian-closure}
For every $\vx\in\mathbb R^{\pairs}$ and $z\in\mathbb C$,
$e^{zQ_{\vx}}$ maps $\mathcal C_G$ into itself.
\end{corollary}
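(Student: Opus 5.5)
The plan is to apply \Cref{lem:gaussian-operator-criterion} with $X=e^{zQ_{\vx}}$. We need to check two things: $X$ is even, and $X\otimes X$ commutes with $\Lambda$. Evenness is immediate, since $Q_{\vx}$ is a combination of degree-two monomials, so every power of $Q_{\vx}$ is even, and so is the norm-convergent exponential series.

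For the commutation, write $X\otimes X=e^{zQ_{\vx}\otimes I}e^{zI\otimes Q_{\vx}}=e^{z(Q_{\vx}\otimes I+I\otimes Q_{\vx})}$. It then suffices to show that $L:=Q_{\vx}\otimes I+I\otimes Q_{\vx}$ commutes with $\Lambda$. Since $L$ is linear in $\vx$, it is enough to check this for a single generator $q_{ij}=ic_ic_j$ with $i<j$.

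The CAR give $[c_ic_j,c_k]=2\delta_{jk}c_i-2\delta_{ik}c_j$. Hence
\begin{equation*}
 [q_{ij}\otimes I+I\otimes q_{ij},\Lambda]
 =\sum_k\bigl([q_{ij},c_k]\otimes c_k+c_k\otimes[q_{ij},c_k]\bigr)
 =2i\bigl(c_i\otimes c_j-c_j\otimes c_i+c_j\otimes c_i-c_i\otimes c_j\bigr)=0.
\end{equation*}
This is the infinitesimal form of the statement that $\Lambda$ is invariant under $\widetilde c_a=\sum_i O_{ai}c_i$ with $O\in O(d)$: $\sum_a\widetilde c_a\otimes\widetilde c_a=\Lambda$. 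The relative sign between the two terms needs care, but the cancellation is forced by the skew-symmetry of the generator. Exponentiating gives $[\Lambda,X\otimes X]=0$.

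To conclude, take any nonzero $\alpha\ket g\in\mathcal C_G$, where $\ket g$ is a Gaussian state and $\rho=\ket g\bra g$. The operator $X$ is invertible with inverse $e^{-zQ_{\vx}}$, so $X\rho X^\dagger\neq0$. By \Cref{lem:gaussian-operator-criterion}, $X\rho X^\dagger/\Tr[X\rho X^\dagger]$ is a Gaussian density matrix. It has rank one, so it is a pure Gaussian state $\ket{g'}\bra{g'}$. Therefore $X\alpha\ket g$ is a scalar multiple of $\ket{g'}$, hence lies in $\mathcal C_G$, and zero trivially maps to zero.

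The one delicate point is that "Gaussian" in \Cref{def:pure-gaussian} means a vacuum of rotated modes, whereas Bravyi's lemma yields a Gaussian density matrix. I would invoke Bravyi's standard fact that a pure Gaussian density matrix is of the form \eqref{eq:gaussian-projector-product}, which closes this gap.
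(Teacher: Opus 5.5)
Your proposal is correct and follows essentially the same route as the paper. Both arguments check evenness, show via the CAR that $\Lambda$ commutes with $Q_{\vx}\otimes I+I\otimes Q_{\vx}$, exponentiate to get $[\Lambda,e^{zQ_{\vx}}\otimes e^{zQ_{\vx}}]=0$, and then apply Bravyi's closure lemma to the rank-one projector, using invertibility of $e^{zQ_{\vx}}$ to guarantee nonvanishing. Your remark that a rank-one Gaussian density matrix has the form \eqref{eq:gaussian-projector-product} only spells out a step the paper leaves implicit.
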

\begin{proof}
Take $Q=Q_{\vx}$ and $A=e^{zQ}$. For $i < j$, the CAR give
\begin{equation*}
 [c_k, c_ic_j]=2(\delta_{ik}c_j-\delta_{jk}c_i).
\end{equation*}
Consequently,
\begin{equation*}
 [(c_i\otimes c_i + c_j\otimes c_j),(c_ic_j\otimes I + I \otimes c_ic_j)]=0,
\end{equation*}
so $[\Lambda,Q\otimes I+I\otimes Q]=0$. Exponentiating yields
$[\Lambda,A\otimes A]=0$, since
$A\otimes A=e^{z(Q\otimes I+I\otimes Q)}$.
Also, $A$ is even because $Q$ is quadratic. For a Gaussian
$\ket{g}$, set $\rho=\ket{g}\bra{g}$. The operator $A$ is
invertible, with inverse $e^{-zQ}$, so $A\ket{g}\ne0$ and $A\rho A^\dagger=A\ket{g}\bra{g}A^\dagger$ has rank one. By \Cref{lem:gaussian-operator-criterion}, dividing by
its trace $\|A\ket{g}\|^2$ gives the Gaussian projector onto
$A\ket{g}/\|A\ket{g}\|$. Thus $A\ket{g}\in\mathcal C_G$, and linearity gives the claim
for every scalar multiple of $\ket{g}$, including zero.
\end{proof}

\section{Approximating the moment matrix of a state}\label{sec:ensemble}

This section shows that the moment matrix of a state $\rho$ can be approximated with the moment matrix of a mixture $\omega$ of Gaussian states:
\begin{equation*}
\|M^\rho-M^\omega\|_\op\le2,
\end{equation*}
proven in \Cref{prop:ensemble}. We start by showing that for a state $\ket{\psi}$, choosing a Gaussian $\ket{g}$ with maximum overlap with $\ket{\psi}$ provides a good approximation. However, the approximation is in a different sense than above.

\begin{lemma}\label{lem:local-overlap}
Let $\ket{\psi}$ be a state and $\ket{g}$ be a Gaussian maximizing $|\langle g|\psi\rangle|$
over all Gaussians. The maximum is positive because
Gaussian Fock basis states span $\mathcal H$. Choose the phase of $\ket{g}$
so that $a:=\langle g|\psi\rangle>0$, and define the unnormalized residual $\ket{\xi}=\ket{\psi}-a\ket{g}$.
For every quadratic Hamiltonian $Q=Q_{\vx}$,
\begin{equation}
 \langle\xi|Q_{\vx}|g\rangle=0,
 \quad
 |\langle\xi|Q_{\vx}^2|g\rangle|\le a\Var_g(Q_{\vx})\le2a\|\vx\|^2.
 \label{eq:local-residual}
\end{equation}
The second bound is equivalently
\begin{equation}
 \left|\frac{\langle g|Q_{\vx}^2|\psi\rangle}{a}
                  -\langle Q_{\vx}^2\rangle_g\right|
 \le\Var_g(Q_{\vx})\le2\|\vx\|^2.
 \label{eq:local-overlap}
\end{equation}
\end{lemma}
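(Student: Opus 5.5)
We use first- and second-order stationarity of the overlap along the Gaussian-preserving curves $t\mapsto e^{zQ}\ket{g}$ supplied by \Cref{lem:gaussian-closure}. Fix $Q=Q_{\vx}$ and set $f(z)=\langle\psi|e^{\bar zQ}\ket{g}$. Since $Q$ is Hermitian, this is the overlap of $\ket\psi$ with the vector $e^{\bar zQ}\ket g$, which lies in $\mathcal C_G$ by \Cref{lem:gaussian-closure}. Its norm is $N(z)=\|e^{\bar zQ}\ket g\|$, so maximality of $\ket g$ gives $|f(z)|\le a\,N(z)$ for all $z\in\mathbb C$, with equality at $z=0$.

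\textbf{First order.} Expand both sides to second order in $z$:
\begin{equation*}
 f(z)=a+\bar z\langle\psi|Q|g\rangle+\tfrac{\bar z^2}{2}\langle\psi|Q^2|g\rangle+O(|z|^3).
\end{equation*}
For $N(z)^2=\langle g|e^{zQ}e^{\bar zQ}|g\rangle=\langle g|e^{2(\Re z)Q}|g\rangle$ we get
\begin{equation*}
 N(z)^2=1+2(\Re z)\langle Q\rangle_g+2(\Re z)^2\langle Q^2\rangle_g+O(|z|^3).
\end{equation*}
Write $\beta=\langle\psi|Q|g\rangle$. We have $\langle\psi|Q|g\rangle=a\langle Q\rangle_g+\langle\xi|Q|g\rangle$, so it suffices to show $\eta:=\langle\xi|Q|g\rangle=0$. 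Comparing the linear terms in $|f|^2\le a^2N^2$ gives $2a\Re(\bar z\beta)\le 2a^2(\Re z)\langle Q\rangle_g$ for all small $z$. Taking $z$ purely imaginary forces $\Im\beta=0$. Taking $z$ real of both signs forces $\Re\beta=a\langle Q\rangle_g$. Hence $\eta=0$. This is the first claim.

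\textbf{Second order.} Replace $Q$ by $\tilde Q=Q-\langle Q\rangle_g I$, which only rescales the curve by a scalar factor $e^{-\bar z\langle Q\rangle_g}$ and so stays in $\mathcal C_G$. Then $\langle\tilde Q\rangle_g=0$, the linear terms vanish, and $\beta=0$. Write $\gamma=\langle\psi|\tilde Q^2|g\rangle$. Comparing quadratic terms in $|f|^2\le a^2N^2$ gives
\begin{equation*}
 a\,\Re(\bar z^2\gamma)\le 2a^2(\Re z)^2\Var_g(Q)
\end{equation*}
for all $z$. Also, $\gamma=a\Var_g(Q)+\langle\xi|\tilde Q^2|g\rangle$. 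Since $\eta=0$ and $\langle\xi|g\rangle=0$ (from the definition of $\xi$ and the choice of $a$), we have $\langle\xi|\tilde Q^2|g\rangle=\langle\xi|Q^2|g\rangle=:\zeta$.

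Substitute $z=re^{i\theta}$ and divide by $ar^2$:
\begin{equation*}
 \Re\bigl(e^{-2i\theta}(aV+\zeta)\bigr)\le 2aV\cos^2\theta=aV(1+\cos2\theta),
\end{equation*}
where $V=\Var_g(Q)$. The $aV\cos2\theta$ terms cancel, leaving $\Re(e^{-2i\theta}\zeta)\le aV$ for all $\theta$. Choosing $\theta$ to align the phase of $\zeta$ gives $|\zeta|\le aV$.

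Finally, \Cref{lem:uniform-variance} gives $V\le2\|\vx\|^2$. The reformulation \eqref{eq:local-overlap} follows by writing $\langle g|Q^2|\psi\rangle=\overline{\langle\psi|Q^2|g\rangle}=a\langle Q^2\rangle_g+\overline{\zeta}$ and dividing by $a$.

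The main obstacle is the second-order bookkeeping. One must first remove the linear term via centering, and then check that $\langle\xi|g\rangle=0$ and $\eta=0$ together make the cross terms vanish. The remaining technical point is making the expansion of $|f|^2\le a^2N^2$ to order $|z|^2$ rigorous, which is routine since all quantities are entire in $z$ on a finite-dimensional space.
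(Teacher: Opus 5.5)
Your proof is correct. It follows the paper's route: perturb $\ket g$ along the Gaussian-preserving curves $e^{z\widehat Q}\ket g$ from \Cref{lem:gaussian-closure}, then read off $\langle\xi|Q|g\rangle=0$ and $|\langle\xi|Q^2|g\rangle|\le a\Var_g(Q)$ from first- and second-order optimality at $z=0$. The only difference is the objective: the paper first shows $a\ket g$ is the nearest point of $\mathcal C_G$ to $\ket\psi$ and differentiates $\tfrac12\|\ket\psi-ae^{tu\widehat Q}\ket g\|^2$, which avoids your expansion of $N(z)^2$ and the $\cos2\theta$ cancellation, while your normalized-overlap condition $|f|\le aN$ gives the same second-order constraint.
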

\begin{proof}
We first show that $a\ket{g}$ is a closest element of $\mathcal C_G$
to $\ket{\psi}$. Every element of $\mathcal C_G$ has the form
$\alpha\ket{g'}$, where $\alpha\in\mathbb C$ and $\ket{g'}$ is a
Gaussian, including zero by taking $\alpha=0$.
By maximality, $|\langle g'|\psi\rangle|\le a$, so
\begin{equation*}
 \begin{aligned}
 \|\ket{\psi}-\alpha\ket{g'}\|^2
 &=1+|\alpha|^2-2\Re(\overline\alpha\langle g'|\psi\rangle)\\
 &\ge1+|\alpha|^2-2a|\alpha|\\
 &=1-a^2+(|\alpha|-a)^2\\
 &\ge1-a^2=\|\ket{\psi}-a\ket{g}\|^2.
 \end{aligned}
\end{equation*}
Also, since $\ket{g}$ is normalized,
\begin{equation*}
 \langle g|\xi\rangle
 =\langle g|\psi\rangle-a\langle g|g\rangle=0.
\end{equation*}

Fix $Q$ and set
\begin{equation*}
 \mu_Q=\langle Q\rangle_g,\quad
 \widehat Q=Q-\mu_Q I,\quad
 \nu_Q=\|\widehat Q\ket{g}\|^2=\Var_g(Q).
\end{equation*}
Subtracting $\mu_Q I$ makes $\widehat Q\ket{g}$ orthogonal to $\ket{g}$.
For $z\in\mathbb C$, define the unnormalized Gaussian vector and loss
\begin{equation}
 \ket{\phi_Q(z)}=a e^{z\widehat Q}\ket{g},\quad
 \ell_Q(z)=\tfrac12\|\ket{\psi}-\ket{\phi_Q(z)}\|^2.
 \label{eq:local-objective}
\end{equation}
Since $e^{z\widehat Q}=e^{-z\mu_Q}e^{zQ}$,
\Cref{lem:gaussian-closure} gives $\ket{\phi_Q(z)}\in\mathcal C_G$.
Also $\ket{\phi_Q(0)}=a\ket{g}$, so for every $u\in\mathbb C$ the function $t\mapsto\ell_Q(tu)$, $t\in\mathbb R$, has a global
minimum at $t=0$. Fix $u\in\mathbb C$ and define the residual vector
$\ket{r(t)}=\ket{\psi}-\ket{\phi_Q(tu)}$. This residual and its derivatives
need not be normalized. We have
$\ell_Q(tu)=\tfrac12\langle r(t)|r(t)\rangle$.
With primes denoting derivatives with respect to $t$, \eqref{eq:local-objective} gives
\begin{equation*}
 \ket{r'(t)}=-u\widehat Q\ket{\phi_Q(tu)},\quad
 \ket{r''(t)}=-u^2\widehat Q^2\ket{\phi_Q(tu)}.
\end{equation*}
In particular,
\begin{equation}
 \ket{r(0)}=\ket{\xi},\quad
 \ket{r'(0)}=-au\widehat Q\ket{g},\quad
 \ket{r''(0)}=-au^2\widehat Q^2\ket{g}.
 \label{eq:local-residual-at-zero}
\end{equation}
The product rule yields
\begin{align}
 \frac{d}{dt}\ell_Q(tu)
 &=\tfrac12\bigl(\langle r'(t)|r(t)\rangle
                  +\langle r(t)|r'(t)\rangle\bigr)
 =\Re\langle r(t)|r'(t)\rangle,
 \label{eq:local-loss-first-derivative}\\
 \frac{d^2}{dt^2}\ell_Q(tu)
 &=\|\ket{r'(t)}\|^2+\Re\langle r(t)|r''(t)\rangle.
 \label{eq:local-loss-second-derivative}
\end{align}
Put $\eta_Q=\langle\xi|\widehat Q|g\rangle$. The minimum at $t=0$
implies that the first derivative vanishes, so
\eqref{eq:local-residual-at-zero} and
\eqref{eq:local-loss-first-derivative} give
\begin{equation}
 0=\left.\frac{d}{dt}\ell_Q(tu)\right|_{t=0}
 =-a\Re(u\eta_Q).
 \label{eq:local-first-derivative}
\end{equation}
Taking $u=1$ and $u=i$ in \eqref{eq:local-first-derivative} gives
$\eta_Q=0$. Since $\langle\xi|g\rangle=0$, we also obtain
$\langle\xi|Q|g\rangle=0$.

Let $\zeta_Q=\langle\xi|Q^2|g\rangle$. Expanding
$\widehat Q^2=Q^2-2\mu_QQ+\mu_Q^2I$ and using
$\langle\xi|g\rangle=\langle\xi|Q|g\rangle=0$ gives
$\langle\xi|\widehat Q^2|g\rangle=\zeta_Q$.
The second derivative is nonnegative at the minimum $t=0$, so
\eqref{eq:local-loss-second-derivative} and
\eqref{eq:local-residual-at-zero} give
\begin{equation*}
 \begin{aligned}
 0\le\left.\frac{d^2}{dt^2}\ell_Q(tu)\right|_{t=0}
 &=\|au\widehat Q\ket{g}\|^2
   -a\Re\bigl(u^2\langle\xi|\widehat Q^2|g\rangle\bigr)\\
 &=a^2|u|^2\nu_Q-a\Re(u^2\zeta_Q).
 \end{aligned}
\end{equation*}
Choose $u$ with $|u|=1$ so that $u^2\zeta_Q=|\zeta_Q|$.
Then $|\zeta_Q|\le a\nu_Q$.
\Cref{eq:uniform-variance} gives $\nu_Q\le2\|\vx\|^2$ and proves
\eqref{eq:local-residual}. Finally, since $Q^2$ is Hermitian,
\begin{equation*}
 \langle g|Q^2|\psi\rangle
 =a\langle Q^2\rangle_g+\overline{\zeta_Q},
\end{equation*}
which gives \eqref{eq:local-overlap} after dividing by $a$.
\end{proof}

\paragraph{Establishing a coefficient-matrix-dependent bound.}
One can see how the bound \eqref{eq:local-overlap} might become more useful for our purposes by taking $\ket{\psi}$ to be a top eigenvector of $Q^2$, corresponding to its maximum eigenvalue. In this case, for $\ket{g}$ and $a$ chosen as in
\Cref{lem:local-overlap}, we get
\begin{equation*}
 \lambda_{\max}(Q^2) = \frac{\langle g|Q^2|\psi\rangle}{a},
\end{equation*}
so that
\begin{equation*}
 \lambda_{\max}(Q^2)-G(Q^2)
 \le\frac{\langle g|Q^2|\psi\rangle}{a}-\langle Q^2\rangle_g
 \le \Var_g(Q).
\end{equation*}

We use this idea to next derive a Gaussian moment-matrix approximation of $M^\rho$ that depends on a coefficient-matrix $A$. Let $\|A\|_*$ denote the nuclear norm of $A$, the sum of the absolute values
of its eigenvalues.

\begin{lemma}\label{lem:scalar-comparison}
For every density matrix $\rho$ and real symmetric coefficient matrix $A$,
\begin{equation}
 \Tr[AM^\rho]\le
 \max_{\ket{g}\text{Gaussian}}\Tr[AM^g]+2\|A\|_*.
 \label{eq:support-comparison}
\end{equation}
\end{lemma}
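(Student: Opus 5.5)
The plan is to reduce the bound to a statement about a single top eigenvector of $H_A$, and then apply \Cref{lem:local-overlap} term by term to a spectral decomposition of $A$.

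\textbf{Reduction to a top eigenvector.} By \eqref{eq:moment-SoS-duality}, $\Tr[AM^\rho]=\langle H_A\rangle_\rho$. Since $A$ is real symmetric, $H_A=\sum_{p,r}A_{pr}q_pq_r=\tfrac12\sum_{p,r}A_{pr}(q_pq_r+q_rq_p)$ is Hermitian. Hence
\[
\langle H_A\rangle_\rho\le\lambda_{\max}(H_A)
\]
for every density matrix $\rho$, and it suffices to bound $\lambda_{\max}(H_A)$. Let $\ket{\psi}$ be a unit top eigenvector of $H_A$. Let $\ket{g}$ be a Gaussian maximizing $|\langle g|\psi\rangle|$, which exists by the compactness noted in \Cref{sec:gaussian}. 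Choose its phase so that $a=\langle g|\psi\rangle>0$, as in \Cref{lem:local-overlap}. Then
\[
\langle g|H_A|\psi\rangle=\lambda_{\max}(H_A)\,a,
\qquad\text{so}\qquad
\lambda_{\max}(H_A)=\frac{\langle g|H_A|\psi\rangle}{a}.
\]

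\textbf{Spectral decomposition of $A$.} Write $A=\sum_r\lambda_r\vx_r\vx_r^{\trans}$ with orthonormal real eigenvectors $\vx_r$. By linearity of $A\mapsto H_A$ and $Q_{\vx}^2=H_{\vx\vx^{\trans}}$,
\[
H_A=\sum_r\lambda_rQ_{\vx_r}^2 .
\]
Therefore
\[
\lambda_{\max}(H_A)=\sum_r\lambda_r\,\frac{\langle g|Q_{\vx_r}^2|\psi\rangle}{a}.
\]

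\textbf{Applying \Cref{lem:local-overlap}.} By \eqref{eq:local-overlap} with $\|\vx_r\|=1$, each ratio $\langle g|Q_{\vx_r}^2|\psi\rangle/a$ lies within distance $2$ of the real number $\langle Q_{\vx_r}^2\rangle_g$. The left-hand side $\lambda_{\max}(H_A)$ is real, so I take real parts of both sides and apply the triangle inequality:
\[
\lambda_{\max}(H_A)\le\sum_r\lambda_r\langle Q_{\vx_r}^2\rangle_g+2\sum_r|\lambda_r|
=\langle H_A\rangle_g+2\|A\|_* .
\]
Finally, $\langle H_A\rangle_g=\Tr[AM^g]$ by \eqref{eq:moment-SoS-duality}, and this is at most $\max_{\ket{g}}\Tr[AM^g]$, which yields \eqref{eq:support-comparison}.

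\textbf{Expected obstacles.} I expect no serious obstacle, since the heavy lifting is already done in \Cref{lem:local-overlap}. The only points needing care are the following.
\begin{itemize}
\item The same Gaussian $\ket{g}$ must serve all terms $Q_{\vx_r}^2$ at once. This holds because $\ket{g}$ depends only on $\ket{\psi}$ and not on $Q$.
\item The per-term errors are complex numbers, so the real-part step is needed before summing them into the nuclear norm.
\item Mixed $\rho$ is handled entirely by the initial reduction to $\lambda_{\max}(H_A)$.
\end{itemize}
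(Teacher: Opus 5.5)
Your proposal is correct and follows essentially the same route as the paper. The paper also takes a top eigenvector $\ket{\psi}$ of $H_A$ and an overlap-maximizing Gaussian $\ket{g}$, applies \eqref{eq:local-overlap} term by term to the spectral decomposition $H_A=\sum_r\alpha_rQ_{\vx_r}^2$ to get $\lambda_{\max}(H_A)\le\Tr[AM^g]+2\|A\|_*$, and uses $\Tr[AM^\rho]=\langle H_A\rangle_\rho\le\lambda_{\max}(H_A)$; you merely make explicit the real-part step that the paper leaves implicit.
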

\begin{proof}
The spectral decomposition of $A$ gives
\begin{equation*}
 A=\sum_r\alpha_r\vx_r\vx_r^{\trans},\quad
 \|\vx_r\|=1,\quad
 H_A=\sum_r\alpha_r Q_{\vx_r}^2.
\end{equation*}
Here $\alpha_r\in\mathbb R$ and the $\vx_r$ form an orthonormal eigenbasis.
Choose a top eigenstate $\ket{\psi}$ of $H_A$, and a Gaussian $\ket{g}$ maximizing $a=\langle g|\psi\rangle>0$ as in the hypothesis of \Cref{lem:local-overlap}. Applying \eqref{eq:local-overlap} to each term of the above decomposition of $H_A$ gives
\begin{align}
 \lambda_{\max}(H_A)
 &=\frac{\langle g|H_A|\psi\rangle}{a}\notag\\
 &\le\sum_r\alpha_r\langle Q_{\vx_r}^2\rangle_g
                         +2\sum_r|\alpha_r|
 =\Tr[AM^g]+2\|A\|_*,
 \label{eq:support-bound}
\end{align}
where the last equality follows from \eqref{eq:moment-SoS-duality}; the latter also gives that for every density matrix $\rho$, $\Tr[AM^\rho]=\langle H_A\rangle_\rho\le\lambda_{\max}(H_A)$.
Maximizing the right side of \eqref{eq:support-bound} over $\ket{g}$ proves
\eqref{eq:support-comparison}.
\end{proof}

In the lemma, since $\ket{\psi}$ depends on $A$, so does $\ket{g}$. We show how this dependence can be removed by taking a mixture of Gaussians instead of a single Gaussian $\ket{g}$.

\begin{proposition}\label{prop:ensemble}
For every density matrix $\rho$, there is a mixture $\omega$ of at most
$\binom d4+1$ pure Gaussian states such that
\begin{equation}
 M^\omega-2I_m\preceq M^\rho\preceq M^\omega+2I_m.
 \label{eq:uniform-moment}
\end{equation}
Equivalently, 
\begin{equation*}
\|M^\rho-M^\omega\|_\op\le2.
\end{equation*}
\end{proposition}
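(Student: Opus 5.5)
The plan is a separating-hyperplane (minimax) argument that turns the coefficient-dependent bound of \Cref{lem:scalar-comparison} into a uniform one. Let $\mathcal K=\conv\{M^g:\ket g\text{ Gaussian}\}$. By the compactness remark after \Cref{def:pure-gaussian}, $\mathcal K$ is a compact convex subset of the space of real symmetric $m\times m$ matrices. Let $\mathcal B=\{B:\|B\|_\op\le2\}$ be the operator-norm ball, which is also compact and convex. The claim is that $M^\rho\in\mathcal K+\mathcal B$, and this set is again compact and convex. Once that is shown, $M^\rho=M^\omega+B$ for some $M^\omega\in\mathcal K$ and some $B$ with $\|B\|_\op\le 2$. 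This gives \eqref{eq:uniform-moment}, since $-2I\preceq B\preceq 2I$. Moreover, $M^\omega$ is a convex combination of Gaussian moment matrices, and by linearity of $X\mapsto M^X$ it is the moment matrix of the corresponding mixture $\omega$.

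To show $M^\rho\in\mathcal K+\mathcal B$, I will argue by contradiction. Suppose $M^\rho\notin\mathcal K+\mathcal B$. Strict separation of a point from a compact convex set then gives a real symmetric $A$ with
\begin{equation*}
\Tr[AM^\rho]>\max_{K\in\mathcal K,\,B\in\mathcal B}\Tr[A(K+B)]=\max_{\ket g}\Tr[AM^g]+2\|A\|_*.
\end{equation*}
Here I use that the support function of the operator-norm ball of radius $2$ is $2\|A\|_*$ (duality of operator and nuclear norms), and that a linear functional over $\mathcal K$ is maximized at an extreme point $M^g$. This strict inequality contradicts \Cref{lem:scalar-comparison}, which finishes the membership argument.

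It remains to bound the number of Gaussians. The mixture must only reproduce the entries of $M^\omega$, and by \eqref{eq:moment-entry} these are determined by the identity component (constant) and the quartic moments $\Tr[\omega c_S]$ for $|S|=4$. So $\{M^g\}$ lies in an affine space of dimension at most $\binom d4$. Carathéodory's theorem then writes any point of $\mathcal K$ as a convex combination of at most $\binom d4+1$ points $M^g$.

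The main obstacle is the separation step. I need $\mathcal K+\mathcal B$ to be closed, which holds because it is a Minkowski sum of compact sets. I also need the support-function identity to be stated cleanly, since it is exactly what matches the $2\|A\|_*$ term of \Cref{lem:scalar-comparison}. Everything else is routine bookkeeping.
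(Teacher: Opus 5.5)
Your proposal is correct and follows the paper's proof essentially step for step. Both arguments strictly separate $M^\rho$ from the compact convex set $\conv\{M^g\}+\{B:\|B\|_\op\le2\}$, use operator/nuclear-norm duality to get the support function $2\|A\|_*$, derive a contradiction with \Cref{lem:scalar-comparison}, and then apply Carath\'eodory in the affine space of dimension at most $\binom d4$ fixed by the quartic moments.
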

\begin{proof}
Let
\begin{equation*}
 \mathcal M_G=\conv\{M^g:\ket{g}\text{ Gaussian}\},\quad
 \mathcal A=\{A\in\mathbb R^{m\times m}:A=A^{\trans},\ \|A\|_\op\le1\}.
\end{equation*}
The Gaussian projectors form a compact set, and their moment matrices
depend continuously on them. The convex hull of a compact set in a
finite-dimensional space is compact, so $\mathcal M_G$ is compact and
convex. The same holds for $\mathcal A$ and hence for
$\mathcal M_G+2\mathcal A=\{M+2A:M\in\mathcal M_G,\ A\in\mathcal A\}$.

For every real symmetric matrix $B$, duality between the operator and nuclear norms gives
\begin{equation}
 \max_{A\in\mathcal A}\Tr[BA]=\|B\|_*.
 \label{eq:operator-ball-support}
\end{equation}
Indeed, decompose $B=\sum_r\beta_r\vx_r\vx_r^{\trans}$ in an orthonormal
eigenbasis. For $A\in\mathcal A$, we have
$|\vx_r^{\trans}A\vx_r|\le1$, so
$\Tr[BA]\le\sum_r|\beta_r|=\|B\|_*$, with equality at
$A=\sum_r\operatorname{sgn}(\beta_r)\vx_r\vx_r^{\trans}$.

Suppose $M^\rho\notin\mathcal M_G+2\mathcal A$.
Strict separation, with the trace inner product on the real vector
space of symmetric $m\times m$ matrices, gives a real symmetric $B$ such that
\begin{equation}
 \begin{aligned}
 \Tr[BM^\rho]
 &>\max_{M\in\mathcal M_G}\Tr[BM]
       +2\max_{A\in\mathcal A}\Tr[BA]\\
 &=\max_{\ket{g}\text{ Gaussian}}\Tr[BM^g]+2\|B\|_*.
 \end{aligned}
 \label{eq:separation-contradiction}
\end{equation}
The equality uses \eqref{eq:operator-ball-support} and the fact that a
linear functional has the same maximum over a set and its convex hull.

Since \eqref{eq:separation-contradiction} contradicts
\eqref{eq:support-comparison} in \Cref{lem:scalar-comparison},
$M^\rho=M+2A$ for some $M\in\mathcal M_G$ and $A\in\mathcal A$,
so $\|M^\rho-M\|_\op\le2$.
By \eqref{eq:moment-entry}, the matrices in $\mathcal M_G$ lie in an affine
space of dimension at most $\binom d4$. Carath\'eodory's theorem therefore
expresses $M=M^\omega$ using at most $\binom d4+1$ Gaussian states,
proving \eqref{eq:uniform-moment}.
\end{proof}

\paragraph{Positive semidefinite coefficient matrices.}
For PSD coefficient matrices $A$, corresponding to $H_A$ that are sums of squares of quadratic Hamiltonians (see the end of \Cref{sec:pairs}), our results in this section show that there exists a Gaussian state earning a constant fraction of the maximum energy.

\begin{corollary}\label{cor:psd}
For every real symmetric $A\succeq0$,
\begin{equation*}
 G(H_A)\ge\tfrac12\lambda_{\max}(H_A).
\end{equation*}
\end{corollary}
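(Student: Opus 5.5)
The plan is to bypass \Cref{prop:ensemble} and the lossy $1/3$ argument from the technical overview. Instead, I would reuse the pointwise comparison in \Cref{lem:local-overlap} with the sharper variance term $\Var_g(Q_{\vx})$ rather than its uniform bound $2\|\vx\|^2$.

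Write $A=\sum_r\alpha_r\vx_r\vx_r^{\trans}$ in an orthonormal eigenbasis with all $\alpha_r\ge0$, since $A\succeq0$. Then $H_A=\sum_r\alpha_rQ_{\vx_r}^2$. Let $\ket{\psi}$ be a top eigenvector of $H_A$, and let $\ket{g}$ be a Gaussian maximizing $|\langle g|\psi\rangle|$, phased so that $a=\langle g|\psi\rangle>0$. Exactly as in the proof of \Cref{lem:scalar-comparison}, we have
\begin{equation*}
 \lambda_{\max}(H_A)=\frac{\langle g|H_A|\psi\rangle}{a}
 =\sum_r\alpha_r\frac{\langle g|Q_{\vx_r}^2|\psi\rangle}{a}.
\end{equation*}
Applying \eqref{eq:local-overlap} termwise, and using $\alpha_r\ge0$ so that no absolute values are needed, gives
\begin{equation*}
 \lambda_{\max}(H_A)\le\sum_r\alpha_r\bigl(\langle Q_{\vx_r}^2\rangle_g+\Var_g(Q_{\vx_r})\bigr).
\end{equation*}

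The key observation is that $\Var_g(Q)=\langle Q^2\rangle_g-\langle Q\rangle_g^2\le\langle Q^2\rangle_g$. Each bracket is therefore at most $2\langle Q_{\vx_r}^2\rangle_g$. Summing with the nonnegative weights $\alpha_r$ yields $\lambda_{\max}(H_A)\le2\langle H_A\rangle_g\le2G(H_A)$, which is the claim.

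I do not expect a real obstacle. The only points to check are that nonnegativity of the $\alpha_r$ is what allows keeping the signed variance bound without switching to the nuclear norm, and that $\langle Q^2\rangle_g\ge\Var_g(Q)$ holds trivially from \eqref{eq:moment-psd}. If one preferred to route through \Cref{prop:ensemble}, that would only give the weaker constant $1/3$, since the uniform bound $\preceq2I$ loses the state-dependent variance. So the direct argument above is the one I would use.
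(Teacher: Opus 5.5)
Your argument is correct and is essentially the paper's proof. Both write $H_A=\sum_r\alpha_rQ_{\vx_r}^2$ with $\alpha_r\ge0$, take a top eigenvector and a maximum-overlap Gaussian, and apply \Cref{lem:local-overlap} to get $\lambda_{\max}(H_A)\le\langle H_A\rangle_g+\sum_r\alpha_r\Var_g(Q_{\vx_r})$. Both then conclude via $\Var_g(Q)=\langle Q^2\rangle_g-\langle Q\rangle_g^2\le\langle Q^2\rangle_g$. One step is left implicit, as it is in the paper: each ratio $\langle g|Q_{\vx_r}^2|\psi\rangle/a$ may be complex even though their weighted sum is real, so the termwise step is really a bound on real parts, which \eqref{eq:local-overlap} supplies.
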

\begin{proof}
Factor $A=\sum_j\vx_j\vx_j^{\trans}$,
so $H_A=\sum_jQ_{\vx_j}^2$.
Let $\ket{\psi}$ be a top eigenstate of $H_A$, and choose a Gaussian $\ket{g}$ maximizing $a=\langle g|\psi\rangle>0$ and $\ket{\xi}=\ket{\psi}-a\ket{g}$ as in the hypothesis of \Cref{lem:local-overlap}.
\begin{align*}
 \lambda_{\max}(H_A)
 &=\langle H_A\rangle_g+\frac{\langle g|H_A|\xi\rangle}{a}\\
 &\le\langle H_A\rangle_g+\sum_j\Var_g(Q_{\vx_j})\\
 &=2\langle H_A\rangle_g-\sum_j\langle Q_{\vx_j}\rangle_g^2
 \le2G(H_A),
\end{align*}
where the first inequality is an application of \eqref{eq:local-residual}.
\end{proof}

\section{Bounding the maximum eigenvalue}\label{sec:coefficient}
This section proves our main eigenvalue bound
\begin{equation*}
 \lambda_{\max}(H)\le G(H)+5\sqrt n\,\sigma.
\end{equation*}
We use the moment-matrix approximation
from \Cref{prop:ensemble},
\begin{equation*}
\|M^\rho-M^\omega\|_\op\le2,
\end{equation*}
to derive the Frobenius-norm bound
\begin{equation*}
\|M^\rho-M^\omega\|_F\le5\sqrt{6n},
\end{equation*}
from which the eigenvalue bound follows after an application of Cauchy--Schwarz.

The next lemma bounds $R^2$ for a homogeneous quartic $R$.
The identity \eqref{eq:contraction-count} will also be used to prove
\Cref{lem:sparse-state} in the next section.

\begin{lemma}\label{lem:psd-lift}
For every homogeneous quartic $R$, let $C=C^R$ and $s=\sqrt{\tau[R^2]}$.
Then
\begin{equation}
 H_{C^2}=6s^2I+[R^2]_4,
 \label{eq:contraction-count}
\end{equation}
and
\begin{equation}
 R^2\preceq\frac n4 H_{C^2}-ns^2I.
 \label{eq:psd-surrogate}
\end{equation}
\end{lemma}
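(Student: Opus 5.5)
The identity \eqref{eq:contraction-count} can be checked directly by comparing Majorana expansions, and \eqref{eq:psd-surrogate} by an operator Cauchy--Schwarz argument on a decomposition of $R$ as a sum of commuting-factor products.

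\textbf{Step 1: the identity.} Write $C=C^R$ and expand $H_{C^2}=\sum_{p,r}\sum_u C_{pu}C_{ur}q_pq_r$.
\begin{itemize}
\item \emph{Diagonal terms.} The terms with $p=r$ contribute $\sum_{p,u}C_{pu}^2 I=\Tr[C^2]I=6s^2I$ by \eqref{eq:quartic-coefficient-norm}.
\item \emph{Pairs sharing one index.} For $p\ne r$ sharing one index, $q_pq_r=-q_rq_p$, while $(C^2)_{pr}=(C^2)_{rp}$. So these terms cancel in the symmetric sum.
\item \emph{Disjoint pairs.} For disjoint $p,r$, a nonzero summand needs $u$ disjoint from $p\cup r$, by \cref{clm:quartic-moments}. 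The proof of that claim gives $C_{pu}q_pq_u=r_Sc_S$ with $S=p\cup u$, and similarly $C_{ur}q_uq_r=r_Tc_T$ with $T=u\cup r$. Since $q_u^2=I$, each summand equals $r_Sr_Tc_Sc_T$, where $|S\cap T|=2$.
\item \emph{Bijection.} Conversely, an ordered pair $(S,T)$ with $|S\cap T|=2$ determines $u=S\cap T$, $p=S\setminus u$ and $r=T\setminus u$ uniquely.
\end{itemize}
In $R^2=\sum_{S,T}r_Sr_Tc_Sc_T$, exactly the pairs with $|S\cap T|=2$ produce degree-four monomials. Those with $S=T$ give $s^2I$, and disjoint ones give degree eight. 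Matching the two expansions proves \eqref{eq:contraction-count}.

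\textbf{Step 2: a sum-of-products form for $R$.} Set $Y_u=Q_{C_u}=\sum_pC_{pu}q_p$, a quadratic Hamiltonian. Then $H_{C^2}=\sum_uY_u^2$ and, by \eqref{eq:quartic-reconstruction}, $R=\tfrac16\sum_u q_uY_u$. Because $C_{pu}\ne0$ only for $p$ disjoint from $u$, $q_u$ commutes with $Y_u$. Hence $X_u:=q_uY_u$ is Hermitian with $X_u^2=Y_u^2$.

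A naive bound $(\sum_u X_u)^2\preceq m\sum_uX_u^2$ loses a factor of order $n^2$, so the plan is to exploit structure.

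\textbf{Step 3: the inequality.} Partition $\pairs$ into $d-1=2n-1$ perfect matchings $\mathcal P_1,\dots,\mathcal P_{2n-1}$ of $[d]$ (a $1$-factorization of $K_d$). Put $Z_k=\sum_{u\in\mathcal P_k}X_u$, so that $R=\tfrac16\sum_kZ_k$.

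Rather than bounding $(\sum_kZ_k)^2$ directly, I would write the target as
\begin{equation*}
\tfrac n4H_{C^2}-ns^2I-R^2=\tfrac n2 s^2I+\tfrac n4[R^2]_4-R^2
\end{equation*}
using Step 1. Each term $r_Sc_S$ appears in exactly three matchings-worth of $X_u$'s, one for each of the three pairings of $S$. The aim is to express the right-hand side as an explicit sum of squares of the form $\sum(\alpha Z_k-\beta Z_l)^2$ or $\sum_S(\ldots)^2$. The constants $n/4$ and $-ns^2$ should come out of the averaging:
\begin{itemize}
\item the diagonal degree-zero parts of $Z_k^2$ contribute multiples of $s^2I$ that account for the subtracted $ns^2I$;
\item the degree-four cross terms reassemble into $[R^2]_4$ via the bijection of Step 1;
\item the degree-eight parts must cancel between $R^2$ and the squares.
\end{itemize}

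\textbf{Main obstacle.} I expect the crux to be organizing the cross terms $X_uX_v$, for distinct $u,v$, into a manifestly PSD remainder with exactly the constants $n/4$ and $n$. My first attempt would be the matching decomposition with operator Cauchy--Schwarz across matchings, tracking the degree-zero traces exactly. If that fails, I would average over random perfect matchings, or over random Fock-basis-diagonal sign patterns, to symmetrize the cross terms before applying Cauchy--Schwarz.
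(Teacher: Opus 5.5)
\textbf{There is a genuine gap in Step~3.} Your Step~1 is correct and is essentially the paper's argument: your middle index $u$ plays the role of the paper's $p$ in $H_{C^2}=\sum_p(C\boldsymbol q)_p^2$. The identities in Step~2 also hold; in fact $X_u=\sum_{S\supset u}r_Sc_S$. But the inequality \eqref{eq:psd-surrogate}, which is the substantive half of the lemma, is never derived, and the matching decomposition you propose works against you.

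Concretely, $Z_k=\sum_S N_k(S)\,r_Sc_S$, where $N_k(S)\in\{0,1,2\}$ counts the pairs of $\mathcal P_k$ contained in $S$. So your ``three matchings-worth'' count actually depends on the factorization. Because a matching contains disjoint pairs, $Z_k$ contains monomials with disjoint supports. Hence the bound $36R^2\preceq(2n-1)\sum_kZ_k^2$ has a right-hand side with a nonzero degree-eight part. It is not of the form $\alpha H_{C^2}+\beta I$, and you give no way to remove those terms.

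Worse, average over relabelings of the factorization; two disjoint pairs lie in a common matching with probability $1/(2n-3)$. Then the degree-eight part of $(2n-1)\sum_kZ_k^2$ is $\tfrac{36(2n-1)}{2n-3}[R^2]_8$. So the Cauchy--Schwarz remainder $\tfrac12\sum_{k,l}(Z_k-Z_l)^2$ has degree-eight part $+\tfrac{72}{2n-3}[R^2]_8$. A sum-of-squares certificate for $36(\tfrac n2s^2I+\tfrac n4[R^2]_4-R^2)$, by contrast, must have degree-eight part $-36[R^2]_8$. The degree-eight terms overshoot rather than cancel.

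\textbf{The missing idea.} You need to split $R$ into pieces whose monomials pairwise share an index, so that their squares have no degree-eight part. Group by stars rather than matchings: for $i\in[d]$ let $R_i=\sum_{S\ni i}r_Sc_S$, which equals $\tfrac13\sum_{u\ni i}X_u$. Then $\sum_iR_i=4R$ and
\begin{equation*}
 \sum_{i=1}^dR_i^2=\sum_{S,T}|S\cap T|\,r_Sr_Tc_Sc_T=4s^2I+2[R^2]_4,
\end{equation*}
because disjoint $S,T$ get weight zero and odd overlaps cancel by anticommutation. Operator Cauchy--Schwarz, $(\sum_iR_i)^2\preceq d\sum_iR_i^2$, gives $16R^2\preceq2n(4s^2I+2[R^2]_4)$. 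By \eqref{eq:contraction-count}, this is exactly \eqref{eq:psd-surrogate}.

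This is the paper's proof in another form. The paper averages conjugation by single Majoranas, $\mathcal E(X)=\frac1d\sum_ic_iXc_i$, which rescales $c_S$ by the factor $1-2|S|/d$, affine in the degree. It then uses $\mathcal E(R^2)-\mathcal E(R)^2\succeq0$. Since $c_iRc_i=R-2R_i$, that operator variance equals $\frac4d\sum_i(R_i-\frac4dR)^2$. So the twirl you were reaching for in your fallback is conjugation by the odd operators $c_i$, not Fock-diagonal sign patterns.
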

\begin{proof}
Let $R=\sum_{S\subseteq[d]:|S|=4}r_Sc_S$.
By \Cref{def:moments} and the symmetry of $C$,
\begin{equation*}
 H_{C^2}=\boldsymbol q^{\trans}C^{\trans}C\boldsymbol q
 =(C\boldsymbol q)^{\trans}(C\boldsymbol q)
 =\sum_{p\in E}(C\boldsymbol q)_p^2.
\end{equation*}
For $p\in E$, $(C\boldsymbol q)_p^2=(\sum_{r\in E}C_{pr}q_r)^2$
has only identity and quartic
terms: if distinct $r,t\in E$ share one index, then $q_r$ and $q_t$
anticommute, so their two ordered products cancel. The identity part is
$\tau[H_{C^2}]I=\Tr[C^2]I=6s^2I$ by \eqref{eq:quartic-coefficient-norm}.

For $S,T\subseteq[d]$ with $|S|=|T|=4$, the product $c_Sc_T$ has degree
four exactly when $|S\cap T|=2$. Let
$S\cap T=\{i,j\}$, $S\setminus T=\{k,\ell\}$, and
$T\setminus S=\{u,v\}$, with $i<j$, $k<\ell$, and $u<v$.
Set $p=(i,j)$, $r=(k,\ell)$, and $t=(u,v)$ in $E$.
By \eqref{eq:quartic-trace-pairing}, using $q_p^2 = I$ and the pairwise commutativity of $q_p$, $q_r$, and $q_t$, we obtain
\begin{equation*}
 r_Sc_S=C_{pr}q_pq_r,\quad r_Tc_T=C_{pt}q_pq_t,
 \quad
 (r_Sc_S)(r_Tc_T)=C_{pr}C_{pt}q_rq_t.
\end{equation*}
Thus each ordered degree-four product in $R^2$ occurs, with the same
sign and multiplicity, in exactly one summand $(C\boldsymbol q)_p^2$.
Conversely, a quartic term in $(C\boldsymbol q)_p^2$ comes from $r,t\in E$ with disjoint
index sets; a nonzero coefficient requires both to be disjoint from the
index set of $p$. These three index sets recover $S,T$ as above.
This proves \eqref{eq:contraction-count}.

To prove the bound on $R^2$ in \eqref{eq:psd-surrogate}, we average
over conjugation by each Majorana:
\begin{equation*}
 \mathcal E(X)=\frac1d\sum_{i=1}^d c_iXc_i.
\end{equation*}
For a Majorana monomial $c_S$ of even degree, conjugation by $c_i$
changes its sign exactly when $i\in S$. Therefore
\begin{equation}
 \mathcal E(c_S)=\left(1-\frac{2|S|}{d}\right)c_S
 \quad(S\subseteq[d],\ |S|\text{ even}).
 \label{eq:conjugation-degree}
\end{equation}
For every Hermitian operator $X$ on $\mathcal H$,
\begin{equation}
 \mathcal E(X^2)-\mathcal E(X)^2
 =\frac1d\sum_{i=1}^d\bigl(c_iXc_i-\mathcal E(X)\bigr)^2\succeq0.
 \label{eq:matrix-variance}
\end{equation}
To verify \eqref{eq:matrix-variance}, set $B=\mathcal E(X)$ and expand: \begin{equation*}  \frac1d\sum_{i=1}^d(c_iXc_i-B)^2  =\mathcal E(X^2)-\mathcal E(X)B-B\mathcal E(X)+B^2  =\mathcal E(X^2)-B^2. \end{equation*} Here $c_i^2=I$ and $\mathcal E(X)=B$; each squared operator is PSD because $c_iXc_i-B$ is Hermitian.

Set \(U=[R^2]_4\). Since \(R\) has degree four and \(R^2\) has only
degrees zero, four, and eight, \eqref{eq:conjugation-degree} gives
\begin{equation*}
 \mathcal E(R)=\left(1-\frac8d\right)R,\quad
 \mathcal E(R^2)=\left(1-\frac{16}{d}\right)R^2
                 +\frac8d U+\frac{16}{d}s^2I.
\end{equation*}
By \eqref{eq:matrix-variance}, we have
\begin{equation*}
 0\preceq\mathcal E(R^2)-\mathcal E(R)^2
 =\frac8d(U+2s^2I)-\frac{64}{d^2}R^2.
\end{equation*}
By \eqref{eq:contraction-count}, \(U+2s^2I=H_{C^2}-4s^2I\).
Multiplying by \(d^2/64\) and using \(d=2n\) proves
\eqref{eq:psd-surrogate}.
\end{proof}

We use the previous result to obtain a bound on $\|R\|_\op$, for a quartic homogeneous $R$, from one on the operator norm of its coefficient matrix, $\|C^R\|_\op$.

\begin{lemma}\label{lem:normalized-spectral}
If $R$ is homogeneous quartic, $\tau[R^2]=1$, and
$\|C^R\|_\op\le \tfrac{1}{\sqrt n}$, then
\begin{equation*}
 \|R\|_\op\le\tfrac52\sqrt n.
\end{equation*}
\end{lemma}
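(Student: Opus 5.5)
Use \eqref{eq:psd-surrogate} from \Cref{lem:psd-lift}, with $C=C^R$ and $s=1$:
\begin{equation*}
 R^2\preceq\frac n4 H_{C^2}-nI.
\end{equation*}
So it suffices to bound $\lambda_{\max}(H_{C^2})$. Then $\|R\|_\op^2=\lambda_{\max}(R^2)\le \frac n4\lambda_{\max}(H_{C^2})-n$.

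Since $C^2\succeq0$, the operator $H_{C^2}=\sum_p (C\boldsymbol q)_p^2$ is a sum of squares of quadratic Hamiltonians. Two bounds are available.

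\textbf{Crude bound.} For any state, $\langle H_{C^2}\rangle_\rho=\Tr[C^2M^\rho]\le \|C\|_\op^2\,\Tr[M^\rho]$ whenever $M^\rho\succeq0$. Here $M^\rho\succeq\mu\mu^\trans\succeq0$ by \eqref{eq:moment-psd}, and $\Tr M^\rho=m$. This only gives $m/n$, which is too big.

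\textbf{Spectral bound via \Cref{prop:ensemble}.} Write $M^\rho\preceq M^\omega+2I$, so
\begin{equation*}
 \Tr[C^2M^\rho]\le \Tr[C^2M^\omega]+2\Tr[C^2]=\Tr[C^2M^\omega]+12.
\end{equation*}
For a Gaussian $g$, $M^g=V^g+\mu^g(\mu^g)^\trans\preceq 2I+\mu^g(\mu^g)^\trans$ by \Cref{lem:uniform-variance}. Hence
\begin{equation*}
 \Tr[C^2M^g]\le 12+\|C\mu^g\|^2\le 12+\|C\|_\op^2\|\mu^g\|^2\le 12+\tfrac1n\cdot n=13,
\end{equation*}
using $\|\mu^g\|^2=n$ from \eqref{eq:gamma-pure}. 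Mixtures obey the same bound, so $\lambda_{\max}(H_{C^2})\le 25$.

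This gives $\|R\|_\op^2\le \frac{25n}{4}-n\le \frac{25}{4}n$, i.e.\ $\|R\|_\op\le\frac52\sqrt n$, which is exactly the target constant.

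Two checks remain. First, $\Tr[C^2]=6$ comes from \eqref{eq:quartic-coefficient-norm} with $s=1$. Second, we may apply \Cref{prop:ensemble} to a top eigenstate of $H_{C^2}$; alternatively, \Cref{cor:psd} gives $\lambda_{\max}(H_{C^2})\le 2G(H_{C^2})\le 26$, which yields $\frac{26n}{4}-n=5.5n$, also below $6.25n$. The main obstacle is only tracking constants: the $-nI$ slack in \eqref{eq:psd-surrogate} is what makes $\tfrac52$ come out, and either route fits.
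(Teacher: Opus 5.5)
Your proof is correct and follows the paper's argument. Both proofs bound $\lambda_{\max}(H_{C^2})\le 12+13=25$ by combining a $2\Tr[C^2]$ slack (you get it from \Cref{prop:ensemble}, the paper from \Cref{lem:scalar-comparison} with $\|C^2\|_*=\Tr[C^2]$) with the Gaussian estimate $\Tr[C^2M^g]\le 2\Tr[C^2]+\|C\mu^g\|^2\le 13$, and then apply \eqref{eq:psd-surrogate}. One small correction: the paper simply drops the $-nI$ term, and your main route does not need it either, since $25n/4$ already suffices; it matters only for your alternative via \Cref{cor:psd}.
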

\begin{proof}
For every real symmetric $A\in\mathbb R^{m\times m}$ with $A\succeq0$,
\begin{equation}
 \lambda_{\max}(H_A)\le n\|A\|_\op+4\Tr[A].
 \label{eq:psd-spectral}
\end{equation}
Indeed, \eqref{eq:support-comparison} applied to a top eigenstate gives
$\lambda_{\max}(H_A)\le G(H_A)+2\Tr[A]$ since $\Tr[A] = \|A\|_*$. For a Gaussian $\ket{g}$,
\eqref{eq:uniform-variance} and $\|\mu^g\|^2=n$ give
\begin{equation*}
 \langle H_A\rangle_g
 =\Tr[AM^g]
 \le(\mu^g)^{\trans}A\mu^g+2\Tr[A]
 \le n\|A\|_\op+2\Tr[A].
\end{equation*}
Taking the maximum over $\ket{g}$ proves \eqref{eq:psd-spectral}.

Now take $C=C^R$ and set $A=C^2\succeq0$. Then $\Tr[A]=6$ by
\eqref{eq:quartic-coefficient-norm}, and $\|A\|_\op\le1/n$. Hence
\eqref{eq:psd-spectral} gives $\lambda_{\max}(H_A)\le1+24=25$.
The bound \eqref{eq:psd-surrogate} implies $R^2\preceq(n/4)H_A$, so
\begin{equation*}
 \|R\|_\op^2=\lambda_{\max}(R^2)
 \le\tfrac n4\lambda_{\max}(H_A)\le\tfrac{25n}{4}.
\end{equation*}
\end{proof}

We next convert an operator-norm bound on the difference of two moment
matrices into a Frobenius-norm bound. Special properties of moment
matrices arising from physical states afford an improvement over more generic relationships between the two norms.

\begin{lemma}\label{prop:norm-conversion}
For density matrices $\rho_1,\rho_2$,
\begin{equation*}
 \|M^{\rho_1}-M^{\rho_2}\|_\op\le2
 \implies
 \|M^{\rho_1}-M^{\rho_2}\|_F\le5\sqrt{6n}.
\end{equation*}
\end{lemma}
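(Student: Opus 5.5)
Set $D=M^{\rho_1}-M^{\rho_2}$. The plan is to identify $D$ with the coefficient matrix $C^R$ of a suitable homogeneous quartic $R$. Then $\|D\|_F^2=6s^2$ by \eqref{eq:quartic-coefficient-norm}, with $s^2=\tau[R^2]$. I would bound $s$ by combining the duality \eqref{eq:moment-SoS-duality} with \Cref{lem:normalized-spectral}.

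\textbf{Step 1: $D$ is a quartic coefficient matrix.} Put $X=\rho_1-\rho_2$, which is Hermitian and traceless, so $D=M^X$ by linearity. By \eqref{eq:moment-entry}:
\begin{itemize}
\item the diagonal entries of $M^X$ equal $\Tr[X]=0$;
\item entries indexed by pairs sharing one index vanish;
\item the remaining entries are $-\Tr[Xc_ic_jc_kc_\ell]$ with four distinct indices.
\end{itemize}
By the orthonormality \eqref{eq:orthogonal-words}, these traces depend only on the degree-four part $[X]_4$. Its coefficients are real, since $X$ and the quartic monomials are Hermitian. Setting $R=2^n[X]_4$ then gives $D=M^{2^{-n}R}=C^R$, so $\|D\|_F^2=\Tr[(C^R)^2]=6s^2$. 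The claim reduces to showing $s\le 5\sqrt n$.

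\textbf{Step 2: self-duality.} Using \eqref{eq:moment-SoS-duality} and \eqref{eq:quartic-reconstruction},
\begin{equation*}
 6s^2=\Tr[DC^R]=\Tr[C^RM^{\rho_1}]-\Tr[C^RM^{\rho_2}]
 =6\bigl(\langle R\rangle_{\rho_1}-\langle R\rangle_{\rho_2}\bigr)\le 12\|R\|_\op .
\end{equation*}
Hence $s^2\le 2\|R\|_\op$.

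\textbf{Step 3: case split.} If $s<2\sqrt n$, we are done immediately. Otherwise $s\ge 2\sqrt n$, and I normalize $R'=R/s$, so that $\tau[R'^2]=1$. Then
\begin{equation*}
 \|C^{R'}\|_\op=\|D\|_\op/s\le 2/s\le 1/\sqrt n .
\end{equation*}
\Cref{lem:normalized-spectral} therefore gives $\|R\|_\op\le\tfrac52\sqrt n\,s$. Substituting into Step 2 yields $s^2\le 5\sqrt n\,s$, that is, $s\le5\sqrt n$.

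In either case $\|D\|_F=\sqrt6\,s\le5\sqrt{6n}$. The only delicate point is Step 1: checking carefully that $M^X$ sees exactly the quartic part of $X$, with the correct sign convention in $C^R=M^{2^{-n}R}$, so that $D=C^R$ holds exactly. Everything else is direct substitution.
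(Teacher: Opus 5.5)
Your proof is correct and is essentially the paper's argument. The paper likewise identifies $D$ with the coefficient matrix of $R=\sum_S(\langle c_S\rangle_{\rho_1}-\langle c_S\rangle_{\rho_2})c_S$ (your $2^n[X]_4$), normalizes by $\Delta=\|D\|_F/\sqrt6$ (your $s$) so that \Cref{lem:normalized-spectral} applies, and uses the self-pairing $\langle R\rangle_{\rho_1}-\langle R\rangle_{\rho_2}=\Delta\le2\|R\|_\op$; it phrases this as a contradiction from $\Delta>5\sqrt n$ rather than your direct case split at $s\ge2\sqrt n$.
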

\begin{proof}
Let $D=M^{\rho_1}-M^{\rho_2}$ and $\Delta=\|D\|_F/\sqrt6$.
Suppose for a contradiction that $\Delta>5\sqrt n$.
Define the homogeneous quartic operator
\begin{equation*}
 R=\frac1\Delta\sum_{S:|S|=4}
 \bigl(\langle c_S\rangle_{\rho_1}-\langle c_S\rangle_{\rho_2}\bigr)c_S.
\end{equation*}
By \eqref{eq:moment-entry}, $C^R=D/\Delta$, so
by \eqref{eq:quartic-coefficient-norm} and the hypothesis of the lemma, respectively,
\begin{equation*}
 \tau[R^2]=\frac{\|C^R\|_F^2}{6}
 =\frac{\|D\|_F^2}{6\Delta^2}=1, \quad
 \|C^R\|_\op\le\frac2\Delta<\frac1{\sqrt n}.
\end{equation*}
\Cref{lem:normalized-spectral} therefore bounds
$\|R\|_\op\le(5/2)\sqrt n$. By \eqref{eq:moment-SoS-duality} and \eqref{eq:quartic-reconstruction},
\begin{equation*}
 \langle R\rangle_{\rho_1}-\langle R\rangle_{\rho_2}
 =\frac{\Tr[D(M^{\rho_1}-M^{\rho_2})]}{6\Delta}
 =\frac{\|D\|_F^2}{6\Delta}=\Delta.
\end{equation*}
Thus
\begin{equation*}
 \Delta\le\lambda_{\max}(R)-\lambda_{\min}(R)
 \le2\|R\|_\op\le5\sqrt n,
\end{equation*}
contradicting the assumption.
\end{proof}

We now obtain the eigenvalue bound by choosing one of the two states in
\Cref{prop:norm-conversion} to be a Gaussian mixture.

\begin{mainbound}[restated]
For every homogeneous quartic Hamiltonian $H$ on $n\ge2$ modes,
\begin{equation*}
 \lambda_{\max}(H)\le G(H)+5\sqrt n\,\sigma.
\end{equation*}
\end{mainbound}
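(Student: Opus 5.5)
The plan is to apply \Cref{prop:ensemble} to a top eigenstate of $H$, convert the resulting operator-norm closeness into Frobenius-norm closeness via \Cref{prop:norm-conversion}, and finish with Cauchy--Schwarz against the coefficient matrix of $H$.

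Let $\rho=\ket{\psi}\bra{\psi}$ for a top eigenvector $\ket{\psi}$ of $H$, so that $\lambda_{\max}(H)=\langle H\rangle_\rho$. By \Cref{prop:ensemble} there is a mixture $\omega=\sum_k p_k\ket{g_k}\bra{g_k}$ of pure Gaussian states with $\|M^\rho-M^\omega\|_\op\le2$. Since $\omega$ is itself a density matrix, \Cref{prop:norm-conversion} applies with $\rho_1=\rho$ and $\rho_2=\omega$. It gives
\begin{equation*}
\|M^\rho-M^\omega\|_F\le5\sqrt{6n}.
\end{equation*}

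Next we express the energy difference through the coefficient matrix. Take $C=C^H$ from \Cref{def:quartic-coefficient}, so that $H=H_{C/6}$ by \eqref{eq:quartic-reconstruction}. Then \eqref{eq:moment-SoS-duality} gives, for any state $\eta$,
\begin{equation*}
\langle H\rangle_\eta=\tfrac16\Tr[CM^\eta].
\end{equation*}
Subtracting the two expectations and applying Cauchy--Schwarz in the Frobenius inner product yields
\begin{equation*}
\langle H\rangle_\rho-\langle H\rangle_\omega=\tfrac16\Tr[C(M^\rho-M^\omega)]\le\tfrac16\|C\|_F\,\|M^\rho-M^\omega\|_F.
\end{equation*}
By \eqref{eq:quartic-coefficient-norm}, $\|C\|_F=\sqrt6\,\sigma$. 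The right side is therefore at most
\begin{equation*}
\tfrac16\sqrt6\,\sigma\cdot5\sqrt{6n}=5\sqrt n\,\sigma.
\end{equation*}

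Finally, $\langle H\rangle_\omega=\sum_k p_k\langle H\rangle_{g_k}\le G(H)$, because energy is linear in the state and each $\ket{g_k}$ is a pure Gaussian. Combining the three displays gives $\lambda_{\max}(H)\le G(H)+5\sqrt n\,\sigma$.

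No step is a real obstacle, since the heavy lifting sits in \Cref{prop:ensemble} and \Cref{prop:norm-conversion}. The only points to check are that a Gaussian mixture is a legitimate density matrix for \Cref{prop:norm-conversion}, and that the factor $6$ in $H_{C^H}=6H$ exactly cancels the $\sqrt6\cdot\sqrt6$ coming from the two Frobenius norms.
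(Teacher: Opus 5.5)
Your proposal is correct and follows the paper's proof essentially step for step. You take a top eigenstate $\rho$, take the Gaussian mixture $\omega$ from \Cref{prop:ensemble}, apply the Frobenius bound of \Cref{prop:norm-conversion}, and finish with Cauchy--Schwarz against $C^H$ using $\|C^H\|_F=\sqrt6\,\sigma$ and $\langle H\rangle_\omega\le G(H)$.
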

\begin{proof}
Let $\ket{\psi}$ be a top eigenstate of $H$ with $\rho = \ket{\psi}\bra{\psi}$, and choose $\omega$ as in
\Cref{prop:ensemble}. \Cref{prop:norm-conversion}
gives $\|M^\rho-M^\omega\|_F\le5\sqrt{6n}$. Since $\omega$ is a mixture of
Gaussians, $\langle H\rangle_\omega\le G(H)$. Also,
$H_{C^H}=6H$ by \eqref{eq:quartic-reconstruction}, and
$\|C^H\|_F=\sqrt6\,\sigma$ by \eqref{eq:quartic-coefficient-norm}.
Thus, by \eqref{eq:moment-SoS-duality}
and Cauchy--Schwarz,
\begin{align*}
 \lambda_{\max}(H)-G(H)
 &\le\langle H\rangle_\rho-\langle H\rangle_\omega
   =\frac16\Tr[C^H(M^\rho-M^\omega)]\\
 &\le\frac16\|C^H\|_F\|M^\rho-M^\omega\|_F
 \le\frac16(\sqrt6\,\sigma)(5\sqrt{6n})
 =5\sqrt n\,\sigma.
\end{align*}
\end{proof}

\section{Constructing a good candidate state}
\label{sec:filter}
This section shows how to construct a state yielding a constant-factor approximation with respect to $\sigma$. We apply $16\sigma I+H$
to a uniformly random Fock basis state and test whether the resulting
normalized state has energy greater than $\sigma/16$. The following cubic trace bound is used to show that polynomially many trials suffice.

\begin{lemma}\label{lem:cubic-trace}
For every homogeneous quartic $R$, let $C=C^R$ and $s=\sqrt{\tau[R^2]}$.
Then
\begin{equation*}
 \tau[R^3]=\Tr[C^3],
\end{equation*}
and
\begin{equation*}
 |\tau[R^3]|\le6^{3/2}s^3\le15s^3,
\end{equation*}
where the last inequality is strict if $s>0$.
\end{lemma}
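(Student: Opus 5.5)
We need $\tau[R^3]=\Tr[C^3]$ and the bound $|\Tr[C^3]|\le 6^{3/2}s^3$.

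For the identity, we write $\tau[R^3]=\tau[R\cdot R^2]$. Since $R$ is homogeneous quartic, trace orthonormality \eqref{eq:orthogonal-words} shows that only the degree-four part of $R^2$ contributes. So $\tau[R^3]=\tau[R\,[R^2]_4]$. By \eqref{eq:contraction-count}, $[R^2]_4=H_{C^2}-6s^2I$, and $\tau[R]=0$, hence
\begin{equation*}
 \tau[R^3]=\tau[R\,H_{C^2}]=\sum_{p,r\in E}(C^2)_{pr}\,\tau[Rq_pq_r].
\end{equation*}
By \eqref{eq:quartic-trace-pairing}, $\tau[Rq_pq_r]=C_{pr}$. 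The sum therefore equals $\sum_{p,r}(C^2)_{pr}C_{pr}=\Tr[C^2C]=\Tr[C^3]$, using the symmetry of $C$. This step rests entirely on \Cref{lem:psd-lift} and \Cref{clm:quartic-moments}. The one point to check is that $q_pq_r$ need not be Hermitian when $p$ and $r$ share one index. In that case $\tau[Rq_pq_r]=0=C_{pr}$, as the proof of \Cref{clm:quartic-moments} already shows, so the pairing identity holds for all $p,r$.

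For the bound, let $\lambda_k$ be the eigenvalues of the real symmetric matrix $C$. Then
\begin{equation*}
 |\Tr[C^3]|\le\sum_k|\lambda_k|^3\le\Bigl(\sum_k\lambda_k^2\Bigr)^{3/2}=\|C\|_F^3=(6s^2)^{3/2}=6^{3/2}s^3,
\end{equation*}
using \eqref{eq:quartic-coefficient-norm}. The middle inequality is the monotonicity of $\ell_p$ norms, $\|\lambda\|_3\le\|\lambda\|_2$.

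Finally, $6^{3/2}=\sqrt{216}<\sqrt{225}=15$. So for $s>0$ we get $6^{3/2}s^3<15s^3$ strictly.

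The main subtlety is the identity step, specifically reducing $\tau[R^3]$ to the pairing with $[R^2]_4$. Both the degree bookkeeping and the contraction identity are already available from earlier in the paper, so no step should be a real obstacle.
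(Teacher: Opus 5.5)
Your proposal is correct and follows essentially the same route as the paper. You reduce $\tau[R^3]$ to $\tau[R\,[R^2]_4]$ by trace orthonormality, apply \eqref{eq:contraction-count} and \eqref{eq:quartic-trace-pairing} to obtain $\Tr[C^3]$, and bound $\sum_k|\lambda_k|^3$ by $\|C\|_F^3=(6s^2)^{3/2}$. Your check that the pairing identity also holds when $p,r$ share an index, where both sides vanish, is sound and matches what \Cref{clm:quartic-moments} already establishes.
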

\begin{proof}
If $s=0$, then $R=0$ and both assertions are immediate. Assume $s>0$.
Since $R$ is homogeneous quartic,
\eqref{eq:orthogonal-words} gives
\begin{equation*}
 \tau[R^3]=\tau[R[R^2]_4],
\end{equation*}
because all non-quartic components of $R^2$ are orthogonal to $R$.
By \eqref{eq:contraction-count},
\begin{equation*}
 \tau[R[R^2]_4]=\tau[RH_{C^2}]-6s^2\tau[R]=\tau[RH_{C^2}].
\end{equation*}
Finally, expanding $H_{C^2}$ using \eqref{eq:SA} and applying
\eqref{eq:quartic-trace-pairing} yields
\begin{equation*}
 \tau[RH_{C^2}]
 =\sum_{p,r\in E}(C^2)_{pr}\tau[Rq_pq_r]
 =\sum_{p,r\in E}(C^2)_{pr}C_{pr}
 =\Tr[C^3],
\end{equation*}
where the last equality uses the symmetry of $C$.
If $\lambda_j(C)$ are the eigenvalues of $C$, then
\begin{equation*}
 |\tau[R^3]|\le\sum_j|\lambda_j(C)|^3
 \le\left(\sum_j\lambda_j(C)^2\right)^{3/2}
 =(6s^2)^{3/2}<15s^3,
\end{equation*}
where for the middle inequality, we bound each $|\lambda_j(C)|$ by
$(\sum_j\lambda_j(C)^2)^{1/2}$. The equality
$\sum_j\lambda_j(C)^2=6s^2$ follows from \eqref{eq:quartic-coefficient-norm}.
\end{proof}

\begin{sparsestate}[restated]
There is a randomized polynomial-time classical algorithm that, given
a Hamiltonian $H$ of the form \eqref{eq:input} on $n\ge2$ modes, returns,
with probability at least $5/6$, a description of a state $\ket{\psi_F}$
satisfying
\begin{equation*}
 \langle H\rangle_{\psi_F}\ge\sigma/16.
\end{equation*}
The state is supported on at most
\begin{equation*}
 \min\left\{t+1,\ 1+\binom n2+\binom n4\right\}
\end{equation*}
Fock basis strings. For $H\ne0$, it is proportional to
$(16\sigma I+H)\ket{\vy}$ for some $\vy\in\{0,1\}^n$.
Its energy is efficiently computable classically.
The running time is polynomial in $n$ and the input length.
\end{sparsestate}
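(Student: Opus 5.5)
Assume $H\ne0$; otherwise any Fock state works. The plan is to analyze the random vector $\ket{\phi_y}=(\beta I+H)\ket{y}$ with $\beta=16\sigma$ and $y$ uniform in $\{0,1\}^n$. Its unnormalized energy and norm are
\begin{align*}
 N(y)&=\bra{y}(\beta I+H)H(\beta I+H)\ket{y}=\beta^2\langle H\rangle_y+2\beta\langle H^2\rangle_y+\langle H^3\rangle_y,\\
 Z(y)&=\|\ket{\phi_y}\|^2=\beta^2+2\beta\langle H\rangle_y+\langle H^2\rangle_y.
\end{align*}
Averaging over $y$ amounts to taking the normalized trace restricted to the diagonal. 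Since $\tau[X]=2^{-n}\sum_y\bra{y}X\ket{y}$, we get $\mathbb E_y N=2\beta\sigma^2+\tau[H^3]$ and $\mathbb E_y Z=\beta^2+\sigma^2$, using $\tau[H]=0$ and $\tau[H^2]=\sigma^2$. \Cref{lem:cubic-trace} gives $|\tau[H^3]|<15\sigma^3$, so
\begin{equation*}
 \mathbb E_yN\ge32\sigma^3-15\sigma^3=17\sigma^3,\quad \mathbb E_yZ=257\sigma^2.
\end{equation*}
The ratio of expectations therefore exceeds $\sigma/16$ comfortably: $17/257>1/16$, since $17\cdot16=272>257$. Consequently $\mathbb E_y[N-\tfrac{\sigma}{16}Z]\ge(17-257/16)\sigma^3=\tfrac{15}{16}\sigma^3>0$, so some $y$ has $N(y)/Z(y)\ge\sigma/16$.

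To show that a random $y$ succeeds with probability at least inverse polynomial, I will bound $f(y):=N(y)-\frac{\sigma}{16}Z(y)$ from above uniformly. Since $\bra{y}H^k\ket{y}\le\|H\|_\op^k$, it suffices to bound $\|H\|_\op\le\sqrt{t}\max|w_S|\le\sqrt t\,\sigma$. A cleaner option is to avoid $t$ and use $\|H\|_\op\le\sum_S|w_S|\le\sqrt{\binom d4}\,\sigma$. Either way $f(y)\le\mathrm{poly}(n)\sigma^3$. Markov's inequality applied to the bounded variable then gives $\Pr[f(y)>0]\ge\frac{\mathbb E f}{\max f}\ge1/\mathrm{poly}(n)$. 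Repeating $O(\mathrm{poly}(n))$ independent trials and checking each one exactly yields success probability at least $5/6$. I expect this step to be the main obstacle: the uniform upper bound must be polynomial rather than exponential. The crude $\ell_1$ bound on $\|H\|_\op$ is polynomial in $n$, so this should be fine, though constants need care.

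For the remaining claims, $H\ket{y}$ is a combination of $c_S\ket{y}$, and each $c_S\ket{y}$ is $\pm$ or $\pm i$ times a single Fock state. That state is obtained by flipping the modes $j$ for which $S$ contains exactly one of $2j-1,2j$; the number flipped is even and at most $4$. So the support has at most $t+1$ strings, and also at most $1+\binom n2+\binom n4$ strings, counting the strings at Hamming distance $0,2,4$. The amplitudes can be computed in polynomial time by Jordan--Wigner signs. The energy $N/Z$ is computed as $\bra{\phi}H\ket{\phi}$ on this sparse support: applying $H$ to each of the polynomially many support strings gives polynomially many terms. Finally, $\sigma$ is computable up to the allowed numerical precision, as noted in \Cref{sec:preliminaries}, and the slack between $17/257$ and $1/16$ absorbs the rounding.
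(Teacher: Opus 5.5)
Your proposal is correct and follows the paper's proof essentially step for step. It uses the same vector $(16\sigma I+H)\ket{\vy}$ and the same witness $N-\tfrac{\sigma}{16}Z$, whose mean is at least $\tfrac{15}{16}\sigma^3$ by \Cref{lem:cubic-trace}; the paper normalizes $F=16I+H/\sigma$ and so gets $\tfrac{15\sigma}{16}$. The remaining steps also match: a reverse-Markov bound from a uniform upper bound on the witness, polynomially many repetitions, and the same support count.

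There is one small slip. The intermediate inequality $\|H\|_\op\le\sqrt t\max_S|w_S|$ is false in general: commuting density--density terms can give $\|H\|_\op=\sum_S|w_S|=t\max_S|w_S|$. The correct chain, as in the paper, is the triangle inequality followed by Cauchy--Schwarz, $\|H\|_\op\le\sum_S|w_S|\le\sqrt t\,\sigma$, which gives exactly the bound you go on to use.
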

\begin{proof}
If $H=0$, return any Fock basis state. Otherwise, let
$F=16I+H/\sigma$. Sample $\vy$ uniformly from $\{0,1\}^n$ and form the
unnormalized vector $\ket{v_{\vy}}=F\ket{\vy}$. We accept the trial if
\begin{equation}
 w(\vy):=\bra{v_{\vy}}(H-\sigma I/16)\ket{v_{\vy}}>0.
 \label{eq:witness-polynomial}
\end{equation}
This test ensures that $\ket{v_{\vy}}\ne0$ and that its normalization has
energy greater than $\sigma/16$.

Since $\tau[H]=0$, $\tau[H^2]=\sigma^2$, and by \Cref{lem:cubic-trace}, $\tau[H^3] \geq -15\sigma^3$: 
\begin{equation}
 \mathbb E_{\vy}[w(\vy)]
 =\tau[F(H-\sigma I/16)F]
 =\frac{\tau[H^3]}{\sigma^2}+\frac{255\sigma}{16}
 \ge \frac{15\sigma}{16}.
 \label{eq:witness-positive}
\end{equation}
To bound the acceptance probability, we first bound $w(\vy)$ uniformly.
Each Majorana operator is unitary by \eqref{eq:car}, so every monomial
$c_S$ has operator norm one. The triangle
inequality, followed by Cauchy--Schwarz over the $t$ nonzero coefficients,
therefore gives
\begin{equation*}
 \|H\|_\op\le\sum_{S:w_S\ne0}|w_S|
 \le\sqrt t\left(\sum_S w_S^2\right)^{1/2}
 =\sqrt t\,\sigma,
\end{equation*}
where the last equality is \eqref{eq:scales}. This implies $\|F\|_\op\le16+\sqrt t$. Let $b=\sqrt t(16+\sqrt t)^2$. Dropping the nonpositive term
$-\sigma\|v_{\vy}\|^2/16$ in \eqref{eq:witness-polynomial} and using
$\|v_{\vy}\|\le\|F\|_\op$ gives
\begin{equation*}
 w(\vy)\le\|H\|_\op\|v_{\vy}\|^2
 \le\|H\|_\op\|F\|_\op^2\le\sigma b.
\end{equation*}
Let $p=\Pr[w(\vy)>0]$. On accepted trials, $w(\vy)\le\sigma b$;
on rejected trials, $w(\vy)\le0$. Thus \eqref{eq:witness-positive} implies
\begin{equation*}
 \frac{15\sigma}{16}\le\mathbb E_{\vy}[w(\vy)]\le p\sigma b,
\end{equation*}
so $p\ge15/(16b)$.
After $N=\lceil32b/15\rceil=O(t^{3/2})$ independent trials,
the probability that every trial is rejected is at most
\begin{equation*}
 (1-p)^N\le e^{-Np}\le e^{-2}<1/6.
\end{equation*}

Expanding a quartic Majorana monomial in creation and annihilation
operators shows that its action on a Fock basis string flips zero, two,
or four occupation bits and multiplies the state by an efficiently
computable phase in $\{\pm1,\pm i\}$. Thus $\ket{v_{\vy}}=F\ket{\vy}$
is supported on at most $t+1$ strings. Since only zero, two, or four
bits can differ from $\vy$, its support is also at most
$1+\binom n2+\binom n4$. Applying the $t$ terms of $H$ to this sparse
vector lets us evaluate \eqref{eq:witness-polynomial} and, for an accepted
trial, the energy of its normalization in polynomial time. We output the normalized
vector from the first accepted trial, or report failure if all $N$
trials are rejected.
\end{proof}

\section{Establishing the approximation guarantee}\label{sec:algorithm}
We prove our main result, \Cref{thm:general}, by combining the eigenvalue bound of
\Cref{thm:main}, the sparse state from \Cref{lem:sparse-state}, and a
Gaussian-state approximation algorithm.

Bravyi, Gosset, K\"onig, and Temme~\cite[Theorem~4]{BGKT}
give a randomized polynomial-time classical algorithm that returns a
Gaussian state $\ket{g}$ of energy
\begin{equation}
 \langle H\rangle_g\ge\frac{G(H)}{c_G\log(2n)},
 \label{eq:BGKT}
\end{equation}
with probability at least $2/3$, for a universal constant $c_G$.

\begin{mainresult}[restated]
There is a randomized polynomial-time classical algorithm that, given
a Hamiltonian $H$ of the form \eqref{eq:input} on $n\ge2$ modes, returns
a description of a state $\ket{\psi}$ satisfying
\begin{equation*}
 \langle H\rangle_\psi\ge
 \frac{\lambda_{\max}(H)}{c_G\log(2n)+80\sqrt n}
\end{equation*}
with probability at least $2/3$, for a universal constant $c_G$.
The output is either a fermionic Gaussian state or a superposition
of at most $t+1$ Fock basis states.
Its energy is efficiently computable classically.
The running time is polynomial in $n$ and the input length.
\end{mainresult}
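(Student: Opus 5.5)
We combine the three ingredients exactly as in the algorithm: the eigenvalue bound of \Cref{thm:main}, the sparse candidate of \Cref{lem:sparse-state}, and the Gaussian guarantee \eqref{eq:BGKT}. Write $\lambda=\lambda_{\max}(H)$, $G=G(H)$, and $D=c_G\log(2n)+80\sqrt n$.

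\textbf{Success probabilities.} Step 1 succeeds with probability at least $5/6$. Each BGKT run succeeds with probability at least $2/3$. So at least one of the two independent Gaussian runs succeeds with probability at least $1-(1/3)^2=8/9$. A union bound over the two failure events gives overall failure probability at most $1/6+1/9<1/3$. When the Gaussian step succeeds, keeping the higher-energy candidate preserves \eqref{eq:BGKT}. On the joint success event we therefore have both $\langle H\rangle_{\psi_F}\ge\sigma/16$ and a Gaussian candidate with energy at least $G/(c_G\log(2n))$. The output energy $E$ is at least the larger of these two bounds.

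\textbf{Combining the bounds.} If $\lambda\le0$, the claim is immediate: $G\ge0$ makes the Gaussian candidate's energy nonnegative, and the returned energy is at least it. (In fact $\lambda\ge0$ always holds, since $\tau[H]=0$.) Otherwise, \Cref{thm:main} gives $\lambda\le G+5\sqrt n\,\sigma$. We also have
\begin{equation*}
 G\le c_G\log(2n)\,E,\qquad \sigma\le16E.
\end{equation*}
Substituting both into the eigenvalue bound yields
\begin{equation*}
 \lambda\le c_G\log(2n)\,E+5\sqrt n\cdot16E=D\,E.
\end{equation*}
This is exactly \eqref{eq:general-ratio}; the constant $80=5\cdot16$ explains the form of the denominator.

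\textbf{Remaining claims.} The output form (Gaussian state, or support at most $t+1$) and the efficient computability of its energy follow directly from \Cref{lem:sparse-state} and the Wick-based evaluation in \Cref{sec:gaussian}. Polynomial running time follows because each component runs in polynomial time. The only point needing care is numerical precision. Energies, including the comparison in Step 3, are computed to inverse-polynomial additive accuracy, so the guarantees must absorb this error. There is slack for this: \Cref{lem:cubic-trace} is strict, and the test in \Cref{lem:sparse-state} is a strict inequality. I expect this bookkeeping, together with checking that a failed Step 1 does not harm the Gaussian guarantee, to be the main, though minor, obstacle; the core inequality chain is immediate.
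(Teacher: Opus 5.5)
Your proposal is correct and matches the paper's proof essentially step for step: the same union bound over the sparse-state failure and the two-run BGKT failure, followed by the same chain $\lambda_{\max}(H)\le G(H)+5\sqrt n\,\sigma\le c_G\log(2n)\langle H\rangle_g+80\sqrt n\,\langle H\rangle_{\psi_F}$. The only differences are cosmetic: your failure bound $1/6+1/9$ is slightly tighter than the paper's $1/6+1/6$, and your $\lambda\le 0$ case stands in for the paper's explicit $H=0$ case.
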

\begin{proof}
If $H=0$, return any Fock basis state. Otherwise, run the algorithm of
\Cref{lem:sparse-state} and run the Gaussian-state approximation algorithm
twice independently, keeping the better Gaussian state $\ket{g}$.
The latter satisfies \eqref{eq:BGKT} with probability at least
$8/9>5/6$. Thus
\begin{equation}\label{eq:sigma-bound}
 \langle H \rangle_{\psi_F} \geq \frac{\sigma}{16}
\end{equation}
and \eqref{eq:BGKT} both hold with probability at least $1-1/6-1/6=2/3$.

Return the higher-energy candidate as $\ket{\psi}$, or return $\ket{g}$
if the sparse-state algorithm reports failure. On the joint success event,
\begin{align*}
 \lambda_{\max}(H)
 &\le G(H)+5\sqrt n\,\sigma\\
 &\le c_G\log(2n)\langle H\rangle_g
       +80\sqrt n\,\langle H\rangle_{\psi_F}\\
 &\le\bigl(c_G\log(2n)+80\sqrt n\bigr)\langle H\rangle_\psi.
\end{align*}
The first inequality is \eqref{eq:main-additive}. The second uses
\eqref{eq:BGKT} and \eqref{eq:sigma-bound}, and the last follows from our choice
of $\ket{\psi}$. The sparse output has support at most $t+1$ by
\Cref{lem:sparse-state}. Both candidate descriptions and their energies
are computable in polynomial time, so the same holds for the output.
\end{proof}

\section{Discussion}
We supplement Gaussian states with a simple approximation based on the normalized Frobenius norm $\sigma$ to obtain an $\Omega(1/\sqrt{n})$-approximation for homogeneous quartic fermionic Hamiltonians. Is such an approximation possible using only Gaussian states? If so, is this possible in polynomial time?

While our results do not rely on SDP hierarchies, they make critical use of moment and coefficient matrices. Are there broader applications of our moment-matrix bound \eqref{eq:moment-bound-OV}? Can our results be improved by using SDP hierarchies and sums-of-squares bounds?

\section*{AI methodology} GPT-6 Astra was used extensively for this work. I proposed research directions for approximation algorithms for homogeneous quartic fermionic Hamiltonians, and GPT developed additive bounds of the form \eqref{eq:main-additive} and presented an $\Omega(1/n^{3/4})$-approximation algorithm. Through further discussion, we improved this to $\Omega(1/\sqrt{n})$, retaining the same type of additive bound while the proof techniques changed quite a bit. The initial proofs were difficult for me to follow, and I guided and contributed to their reformulation and simplification, often favoring clearer and broadly accessible arguments over sharper constants. For example, I suggested a simpler optimization formulation of the variational argument underlying our main technical lemma, \Cref{lem:local-overlap}, leading to a short, elementary proof. The results and proofs reached their present form through sustained exchanges over approximately a month. This account was reviewed with GPT against the available conversation history.    

\section*{Acknowledgements}
I thank Maarten Stroeks and Yaroslav Herasymenko for conversations that reminded me to revisit using LLMs for this work. 

This work is supported by the U.S.\ Department of Energy (DOE), Office of Science, National Quantum Information Science Research Centers, Quantum Systems Accelerator (QSA). Additional support is acknowledged from DOE, Office of Science, Accelerated Research in Quantum Computing, Fundamental Algorithmic Research toward Quantum Utility (FAR-Qu). 

This article has been authored by an employee of National Technology \& Engineering Solutions of Sandia, LLC under Contract No.\ DE-NA0003525 with the U.S. Department of Energy (DOE). The employee owns all right, title and interest in and to the article and is solely responsible for its contents. The United States Government retains and the publisher, by accepting the article for publication, acknowledges that the United States Government retains a non-exclusive, paid-up, irrevocable, world-wide license to publish or reproduce the published form of this article or allow others to do so, for United States Government purposes. The DOE will provide public access to these results of federally sponsored research in accordance with the DOE Public Access Plan \url{https://www.energy.gov/downloads/doe-public-access-plan}.

\printbibliography

@article{BGKT,
  author = {Sergey Bravyi and David Gosset and Robert K{\"o}nig and Kristan Temme},
  title = {Approximation algorithms for quantum many-body problems},
  journal = {Journal of Mathematical Physics},
  volume = {60},
  number = {3},
  pages = {032203},
  year = {2019},
  doi = {10.1063/1.5085428},
  url = {https://arxiv.org/abs/1808.01734v1},
  note = {Theorem and equation numbering follows arXiv:1808.01734v1}
}

@article{Bravyi,
  author = {Sergey Bravyi},
  title = {Lagrangian representation for fermionic linear optics},
  journal = {Quantum Information and Computation},
  volume = {5},
  number = {3},
  pages = {216--238},
  year = {2005},
  doi = {10.26421/QIC5.3-3},
  url = {https://arxiv.org/abs/quant-ph/0404180v2}
}

@inproceedings{HO,
  author = {Matthew B. Hastings and Ryan O'Donnell},
  title = {Optimizing Strongly Interacting Fermionic {Hamiltonians}},
  booktitle = {Proceedings of the 54th Annual ACM SIGACT Symposium on Theory of Computing (STOC)},
  pages = {776--789},
  year = {2022},
  publisher = {ACM},
  doi = {10.1145/3519935.3519960},
  url = {https://arxiv.org/abs/2110.10701v4},
  note = {Full version arXiv:2110.10701v4; theorem numbering follows that version}
}

@article{DKKA,
  author = {Matthew Ding and Robbie King and Bobak T. Kiani and Eric R. Anschuetz},
  title = {Optimizing Sparse {SYK}},
  journal = {Quantum},
  volume = {10},
  pages = {2029},
  year = {2026},
  doi = {10.22331/q-2026-03-16-2029},
  url = {https://arxiv.org/abs/2506.09037v3}
}

@article{HSHT,
  author = {Yaroslav Herasymenko and Maarten Stroeks and Jonas Helsen and Barbara Terhal},
  title = {Optimizing sparse fermionic {Hamiltonians}},
  journal = {Quantum},
  volume = {7},
  pages = {1081},
  year = {2023},
  doi = {10.22331/q-2023-08-10-1081},
  url = {https://arxiv.org/abs/2211.16518}
}

@inproceedings{HPT,
  author = {Daniel Hothem and Ojas Parekh and Kevin Thompson},
  title = {Improved Approximations for Extremal Eigenvalues of Sparse {Hamiltonians}},
  booktitle = {18th Conference on the Theory of Quantum Computation, Communication and Cryptography (TQC 2023)},
  series = {Leibniz International Proceedings in Informatics (LIPIcs)},
  volume = {266},
  pages = {6:1--6:10},
  year = {2023},
  publisher = {Schloss Dagstuhl -- Leibniz-Zentrum f{\"u}r Informatik},
  doi = {10.4230/LIPIcs.TQC.2023.6},
  url = {https://arxiv.org/abs/2301.04627}
}

@misc{STH,
  author = {Maarten Stroeks and Barbara M. Terhal and Yaroslav Herasymenko},
  title = {Optimizing fermionic {Hamiltonians} with classical interactions},
  year = {2025},
  note = {arXiv:2510.02122v2},
  url = {https://arxiv.org/abs/2510.02122v2}
}

@inproceedings{So,
  author = {Anthony Man-Cho So},
  title = {Improved Approximation Bound for Quadratic Optimization Problems with Orthogonality Constraints},
  booktitle = {Proceedings of the Twentieth Annual ACM-SIAM Symposium on Discrete Algorithms (SODA)},
  pages = {1201--1209},
  year = {2009},
  publisher = {SIAM},
  doi = {10.1137/1.9781611973068.130}
}

@misc{BassoChenDalzell2024,
  author = {Joao Basso and Chi-Fang Chen and Alexander M. Dalzell},
  title = {Optimizing random local {Hamiltonians} by dissipation},
  year = {2024},
  eprint = {2411.02578},
  archivePrefix = {arXiv},
  primaryClass = {quant-ph},
  doi = {10.48550/arXiv.2411.02578},
  url = {https://arxiv.org/abs/2411.02578}
}

@article{McArdle2020,
  author = {Sam McArdle and Suguru Endo and Al{\'a}n Aspuru-Guzik and Simon C. Benjamin and Xiao Yuan},
  title = {Quantum computational chemistry},
  journal = {Reviews of Modern Physics},
  volume = {92},
  number = {1},
  pages = {015003},
  year = {2020},
  doi = {10.1103/RevModPhys.92.015003},
  url = {https://arxiv.org/abs/1808.10402}
}

@article{Arovas2022,
  author = {Daniel P. Arovas and Erez Berg and Steven A. Kivelson and Srinivas Raghu},
  title = {The {Hubbard} Model},
  journal = {Annual Review of Condensed Matter Physics},
  volume = {13},
  pages = {239--274},
  year = {2022},
  doi = {10.1146/annurev-conmatphys-031620-102024},
  url = {https://arxiv.org/abs/2103.12097}
}

@article{LCV2007,
  author = {Yi-Kai Liu and Matthias Christandl and Frank Verstraete},
  title = {Quantum Computational Complexity of the {$N$}-Representability Problem: {QMA} Complete},
  journal = {Physical Review Letters},
  volume = {98},
  number = {11},
  pages = {110503},
  year = {2007},
  doi = {10.1103/PhysRevLett.98.110503},
  url = {https://arxiv.org/abs/quant-ph/0609125}
}

@article{PNA2010,
  author = {Stefano Pironio and Miguel Navascu{\'e}s and Antonio Ac{\'i}n},
  title = {Convergent relaxations of polynomial optimization problems with non-commuting variables},
  journal = {SIAM Journal on Optimization},
  volume = {20},
  number = {5},
  pages = {2157--2180},
  year = {2010},
  doi = {10.1137/090760155},
  url = {https://arxiv.org/abs/0903.4368v2}
}

@misc{PT2022,
  author = {Ojas Parekh and Kevin Thompson},
  title = {An Optimal Product-State Approximation for {2-Local Quantum Hamiltonians} with Positive Terms},
  year = {2022},
  eprint = {2206.08342},
  archivePrefix = {arXiv},
  primaryClass = {quant-ph},
  url = {https://arxiv.org/abs/2206.08342v2},
  note = {Revised March 2026; section numbering follows arXiv:2206.08342v2}
}

@misc{Nielsen2005,
  author = {Michael A. Nielsen},
  title = {The {Fermionic} canonical commutation relations and the {Jordan--Wigner} transform},
  date = {2005-07-29},
  url = {https://futureofmatter.com/assets/fermions_and_jordan_wigner.pdf},
  note = {Expository notes}
}

@article{HTS,
  author = {Arijit Haldar and Omid Tavakol and Thomas Scaffidi},
  title = {Variational wave functions for {Sachdev--Ye--Kitaev} models},
  journal = {Physical Review Research},
  volume = {3},
  number = {2},
  pages = {023020},
  year = {2021},
  doi = {10.1103/PhysRevResearch.3.023020},
  url = {https://arxiv.org/abs/2009.03924v2}
}
\end{document}